\documentclass[11pt]{article}
\pagestyle{plain}

\usepackage{graphicx}
\usepackage{bm}
\usepackage{amsmath}
\usepackage{amssymb}
\usepackage{amsfonts}
\usepackage{amsthm}
\usepackage{mathtools}
\usepackage{hyperref}
\usepackage{braket}
\usepackage[normalem]{ulem}
\usepackage{wrapfig,enumitem}
\usepackage{cite,bbm}
\usepackage{tikz}
\usepackage[left=1in,right=1in,top=1in,bottom=1in]{geometry}

\hypersetup{colorlinks=true,urlcolor=[rgb]{0,0,0.5},citecolor=[rgb]{0.5,0,0},linkcolor=[rgb]{0,0,0.4}}

\newtheorem{theorem}{Theorem}[section]

\newtheorem{lemma}[theorem]{Lemma}
\newtheorem{corollary}[theorem]{Corollary}
 
\newtheorem{definition}[theorem]{Definition}  
\newtheorem{proposition}[theorem]{Proposition} 

\theoremstyle{definition}
\newtheorem{remark}[theorem]{Remark}

\def\autorefapp#1{\hyperref[#1]{Appendix~\ref{#1}}}

\renewcommand{\title}[1]{\vbox{\center\bf{\Large #1}}\vspace{5mm}}
\renewcommand{\author}[1]{\vbox{\center{#1}}\vspace{5mm}}
\newcommand{\address}[1]{\vbox{\center\em#1}}

\def\tr{{\rm tr}}

\def\C{\mathbb{C}}

\def\Z{\mathbb{Z}}

\def\iden{{\rm Id}}

\def\where{\quad {\rm where} \quad}

\def\and{\quad {\rm and} \quad}


\def\CA{{\cal A}}

\def\ketbra#1{|{#1}\rangle\!\langle{#1}|}

\def\Pr{\mathbb{P}}
\def\Ex{\mathbb{E}}

\begin{document}


\title{Two classes of quantum spin systems that are gapped on any bounded-degree graph}
\author{Nicholas Hunter-Jones${}^{a,b}$ and Marius Lemm${}^c$}

\address{${}^a$Department of Physics, University of Texas at Austin, Austin, TX 78712\\
${}^b$Department of Computer Science, University of Texas at Austin, Austin, TX 78712\\
${}^c$Department of Mathematics, University of T\"ubingen, 72076 T\"ubingen, Germany\\
}

\begin{abstract}
We study translation-invariant  quantum spin Hamiltonians on general graphs with non-commuting interactions either given by (i)  a random  rank-$1$ projection or (ii) Haar projectors. 
For (i), we prove that the Hamiltonian is gapped on any bounded-degree graph with high probability at large local dimension. 
For (ii), we obtain a gap for sufficiently large local dimension.
Our results provide examples where the folklore belief that \textit{typical translation-invariant Hamiltonians are gapped} can be proved, which extends a result  by Bravyi and Gosset from 1D qubit chains with rank-$1$ interactions to general bounded-degree graphs.
We derive the gaps by analytically verifying generalized Knabe-type finite-size criteria that apply to any bounded-degree graph. 
\end{abstract}


\section{Introduction}
The existence of a spectral gap above the ground state sector is a central concept in the study of interacting models of quantum matter. (Here, a spectral gap means a difference between the two lowest eigenvalues that is bounded from below independently of the system size.) First and foremost, existence of a spectral gap underpins the very definition of a topological quantum phase known as quasi-adiabatic evolution \cite{hastings2005quasiadiabatic,bachmann2012automorphic}. Moreover, the existence of a uniform spectral gap severely constrains ground state wave functions, implying that they satisfy (i) exponential decay of correlations (proved in all dimensions \cite{hastings2006spectral,nachtergaele2006lieb}) and (ii) an area law for the entanglement entropy (proved in 1D \cite{hastings2007area,arad2017rigorous} and for 2D frustration-free systems \cite{anshu2021area} and believed to be true in any dimension). The existence of a spectral gap is also an important ingredient for the many-body adiabatic theorem \cite{bachmann2018adiabatic,teufel2020non} and for several approaches to quantum computing such as adiabatic quantum computation \cite{aharonov2008adiabatic,albash2018adiabatic} and measurement-based quantum computation \cite{briegel2009measurement}.  

This central role of the spectral gap has raised the important question of whether spectral gaps are in fact common or uncommon occurrences in models of quantum matter.
\\ 

\noindent
\textit{Question 1: Does a ``typical'' quantum many-body Hamiltonian have a spectral gap?}\\

Of course, the answer depends on the precise realization of ``typical'', i.e., which class of Hamiltonians one considers. Physical intuition suggests that typical quantum many-body Hamiltonians should be well inside a quantum phase and should thus enjoy a spectral gap, but this heuristic does not suggest how this gap materializes mathematically. In fact, the average spectral gap between any two consecutive eigenvalues is exponentially small in the system size, so in this sense a size-independent spectral gap is unusual and must depend on considering the spectral edge near the ground state.

A rigorous formulation of Question 1 was introduced in \cite{lemm2019gaplessness}; see also \cite{bravyi2015gapped,lancien2022correlation,jauslin2022random}: Fix a finite graph $\Lambda$ and a local qudit dimension $d$. On the usual many-body Hilbert space $\bigotimes_{x\in\Lambda}\mathbb C^d$ sample a single random interaction, a rank-$r$ projection $P$, and then construct the Hamiltonian
\begin{equation}\label{eq:introH}
    H =\frac{1}{2} \sum_{\substack{x,y\in\Lambda:\\ x\sim y}}P_{x,y},\qquad \textnormal{with }P_{x,y}=P\otimes \iden_{\Lambda\setminus\{x,y\}}
\end{equation}
where $x\sim y$ denotes nearest-neighbors of the graph $\Lambda$. Since $P$ acts in the same way for every bond $x\sim y$, the Hamiltonians \eqref{eq:introH} are called  ``translation-invariant in the bulk''.\footnote{Note that being ``translation-invariant in the bulk'' is weaker property than commuting with translations. The latter would also require imposing a form of periodic boundary conditions.} Quantum spin Hamiltonians that are   ``translation-invariant in the bulk'' are ubiquitous in condensed-matter physics and quantum information theory.
It turns out that one can address the above question about typicality of gaps for these random translation-invariant Hamiltonians under two assumptions on the parameters: (i) the local qudit dimension $d$ is large and (ii) the interaction rank $r$ is small.  We remark in passing that the importance of large qudit dimension has recently been observed in a variety of quantum information settings, arguably most poignantly for error-correcting codes \cite{brock2025quantum}.
Our assumptions imply that $H_\Lambda$ is frustration-free, which is a necessary (but far from sufficient) assumption for all existing derivations of spectral gaps.\footnote{The degree to which frustration-freeness is crucial is perhaps best evidenced by the fact that the Haldane conjecture for the 1D integer-spin Heisenberg antiferromagnet, a simple and natural but frustrated Hamiltonian, remains wide open since 1983.}

A special case of interest has been to study interaction ranks $r=1$ which occur physically, e.g., in certain Rydberg arrays \cite{lin2020quantum,mizuta2020exact,hudomal2022driving}.
Indeed, for rank-$1$ interactions on qubit chains (i.e., $\Lambda=\{1,\ldots,L\}$ is one-dimensional and the local Hilbert space dimension $d=2$), Bravyi and Gosset \cite{bravyi2015gapped} proved that gaps of Hamiltonians of the form Eq.~\eqref{eq:introH} with rank-$1$ interactions are generic, i.e., they occur with probability $1$ for any natural continuous probability measure over rank-$1$ $P$'s.  In a somewhat orthogonal direction, the work \cite{movassagh2017generic} showed that fully disordered Hamiltonians in which each $P_{x,y}$ is sampled \textit{independently} at random, are gap\textit{less} with probability $1$. These disordered Hamiltonians are very far from translation-invariant and the proof in \cite{movassagh2017generic} also uses a completely different mechanism, so-called Griffiths regions. This further evidences that the answer to the question whether gapped or gapless behavior is generic is subtle. 
Next, the work \cite{lemm2019gaplessness} showed that translation-invariance indeed changes the picture in general also beyond the 1D qubit chains considered in \cite{bravyi2015gapped}. Namely, \cite{lemm2019gaplessness} established that there is a positive probability of having a spectral gap for any qudit chain and any interaction rank $r<d$. The work \cite{lancien2022correlation} obtained spectral gaps with high probability for parent Hamiltonians of random translation-invariant tensor product states on $\mathbb Z$ and $\mathbb Z^2$ in the regime of large local dimension and large bond dimension. The work \cite{jauslin2022random} extended the result of \cite{lemm2019gaplessness} for 1D chains to interaction ranks $r<d^2/4$ and, more importantly, to hypercubic lattices in any dimension, requiring only that the rank is sufficiently small compared to the local dimension.\\ 
  
An open problem has been to prove that a randomly chosen translation-invariant Hamiltonian constructed as in Eq.~\eqref{eq:introH} has a high probability of being gapped. A related open problem has been to obtain a proof that  is robust with respect to the local graph geometry.

In our first main result (\textbf{\autoref{thm:main1}}), we resolve these problems for interaction rank $r=1$. We prove that for a random rank $r=1$ interaction, there is a high probability that Hamiltonians of the form Eq.~\eqref{eq:introH} are gapped on any bounded-degree graph for sufficiently large $d$. The result shows that the discovery of Bravyi-Gosset \cite{bravyi2015gapped} for $r=1$ qubit chains is much more general. One novelty   is universality with respect to the geometrical structure, i.e., the probability only depends on the maximal degree and is otherwise independent of the graph. 
Such robustness of certain spectral gap properties with respect to the graph structure has recently been of growing interest in related contexts \cite{lucia2023nonvanishing,lemm2025critical}. In our case, the robustness arises from a  generalization of Knabe's finite-size criterion \cite{knabe1988energy} to bounded-degree graphs (Proposition \ref{prop:fs}).

In a second, complementary result (\textbf{\autoref{thm:main2}}), we use a similar finite-size criterion (Proposition \ref{prop:fs2}) to derive spectral gaps for a class of deterministic Hamiltonians \eqref{eq:introH} where $P$ is constructed from Haar projectors. These Hamiltonians are related to unitary designs \cite{brown2010convergence,brandao2016local,haferkamp2021improved,mittal2023local,Harrow23}. We prove that that any member of this class of Hamiltonians is gapped on a bounded-degree graph for sufficiently large local dimension. Interestingly, their interactions can have rather large rank. 
These models therefore provide yet another class of explicit gapped Hamiltonians on \textit{arbitrary bounded-degree} graphs. The second result shows the versatility of the finite-size criteria we consider here, because it provides a second completely different class of models with large interaction ranks where they can be verified.

We recall undecidability results for the spectral gap of translation-invariant Hamiltonian \cite{cubitt2015undecidability,cubitt2022undecidability}, which also require large local qudit dimension. While our result proves a gap, and thus decidability, it does not stand in any contradiction to these results. Naturally, our results limit the possible scope of undecidability. E.g., Theorem \ref{thm:main1} implies as a corollary that undecidability of the gap problem cannot occur for a typical $r=1$ interaction. 

%



\section{Main results}

\subsection{Result 1: Gapped random Hamiltonians}
We present a class of random translation-invariant Hamiltonians. 
Let $\Lambda$ be a graph of maximal degree $k$. Fix a projection operator $P$ on $\mathbb C^d\otimes \mathbb C^d$ of rank $r=1$. We consider Hamiltonians of the form \eqref{eq:introH}, i.e.,
\begin{equation}\label{eq:HLambda}
H_\Lambda=\frac{1}{2}\sum_{\substack{x,y\in\Lambda:\\ x\sim y}} P_{x,y}\,, \where P_{x,y}=P\otimes \iden_{\Lambda\setminus\{x,y\}}\,.
\end{equation}

For definiteness, we take the single random interaction $P$ to be sampled as follows. We let $G=G^T\in \mathbb R^{d\times d}$ be a \textit{GOE random matrix}, i.e., we take independent Gaussians
\[
\begin{aligned}
  G_{i,j}\sim& \mathcal N_{\mathbb R}(0,\tfrac{1}{d}),\qquad 1\leq i<j\leq d,\\ 
  G_{i,i}\sim&\sqrt{2} \mathcal N_{\mathbb R}(0,\tfrac{1}{d}),\qquad 1\leq i\leq d
\end{aligned}
\]  
and set $G_{i,j}=G_{j,i}$ for $i>j$.
 We normalize the GOE matrix by
\begin{equation}\label{eq:Cdefn}
C=\frac{1}{\sqrt{\mathrm{Tr} (G^\dagger G)}} G
\end{equation}
and define the vector $v\in\mathbb C^d\otimes \mathbb C^d$ as
\[
v=\sum_{i,j=1}^dC_{ij}e_i\otimes e_j,\qquad \|v\|^2=\mathrm{Tr}( C^\dagger C)=1.
\]
Finally, we set
\[
P=\ketbra{v}\,.
\]
In summary, for a single $d\times d$ GOE random matrix $G$, this defines a random translation-invariant Hamiltonian $H_\Lambda$ on any graph $\Lambda$ via \eqref{eq:HLambda}.      We remark that  $P$ commutes with the swap operator, i.e., the edges are not oriented. This follows from
 $G=G^T$ and it is the reason why we choose the GOE class instead of, say, the GUE class. This choice is made for convenience and the method can be adapted to non-symmetric coefficient matrices (and thus oriented edges) as described in Remark \ref{rmk:generalizations}.

We recall the following two fundamental definitions.

\begin{definition}[Frustration-freeness]
We say the Hamiltonian $H_\Lambda$ in \eqref{eq:introH} is frustration-free, if $\ker(H_\Lambda)\neq\{0\}$.
\end{definition}
By the QSAT criterion in Lemma 2.4 from Ref.~\cite{jauslin2022random} which follows from \cite{sattath2016local} (see also \cite{movassagh2010unfrustrated}) the Hamiltonian $H_\Lambda$ is frustration-free, if 
\begin{equation}\label{eq:FFconstraint}
    d^2\geq e(2k-1).
\end{equation}

\begin{definition}[Spectral gap]
    For $H_\Lambda$ frustration-free, we write $\Delta_\Lambda$ for its smallest positive eigenvalue and we call it the spectral gap.
\end{definition}

Our first main result says that there is a positive probability that the so-constructed Hamiltonians $H_\Lambda$ are gapped on any degree-$k$ graph, for any fixed $k$. This probability converges exponentially to $1$ as the local qudit dimension $d$ increases. The result gives precise quantitative parameter ranges. Since it uses a concentration bound for random matrices, a key role is played by the smallest moment of a $d\times d$ GOE matrix.
\begin{equation}\label{eq:mddefn}
m_d=\min_{p\geq 1} ( \mathbb{E}[\tr(G^{2p})])^{1/(2p)}
\end{equation}
There are many ways to obtain good analytical estimates on $m_d$, as we describe in Prop.\ \ref{prop:md} below.

\begin{theorem}[Main result 1] \label{thm:main1}
Let $d,k\geq 1$ be integers and let $\delta>0$ be such that \eqref{eq:FFconstraint} holds and
\begin{equation}\label{eq:main1condition}
\frac{d(1-\delta)}{(m_d^2+\delta/2)^2}>2k-2.
\end{equation}

 Then, there exists a constant $c>0$ and an event $\Omega$ with probability at least
\begin{equation}
    \mathbb P(\Omega)\geq 1-2e^{-d^2\delta^2 /4}
\end{equation}
such that for every $\omega\in\Omega$ the following holds: For every finite graph $\Lambda$ of maximal degree $k$,
\[
\Delta_\Lambda>c>0.
\]
\end{theorem}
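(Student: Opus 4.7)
The plan combines three ingredients: the bounded-degree Knabe-type finite-size criterion (Proposition~\ref{prop:fs}), a closed-form computation of the overlap of two adjacent rank-$1$ projections, and a Gaussian concentration estimate for the random matrix $C$. Proposition~\ref{prop:fs} reduces the task of proving $\Delta_\Lambda > c$ uniformly over all graphs of maximum degree at most $k$ to a single quantitative smallness condition on the squared overlap $\|P_{x,y}P_{y,z}\|_{op}^{2}$ of two adjacent projections; the combinatorial factor $2(k-1)$ in \eqref{eq:main1condition} reflects the maximum number of edges sharing an endpoint with any given edge, and the criterion requires a condition of the form $(2k-2)\,\|P_{x,y}P_{y,z}\|_{op}^{2}\leq 1-\delta$.

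For the rank-$1$ interaction $P=\ketbra{v}$ with $v=\sum_{ij}C_{ij}\,e_{i}\otimes e_{j}$ and $C=C^{T}$ (inherited from $G=G^{T}$ in the GOE class), this overlap admits an explicit closed-form computation. Contracting the shared index at vertex $y$ identifies the restriction of $P_{x,y}P_{y,z}P_{x,y}$ to the range of $P_{x,y}$ with $C^{4}$ acting on $\mathcal{H}_{z}$, and hence
\[
\|P_{x,y}P_{y,z}P_{x,y}\|_{op}=\|C\|_{op}^{4},\qquad\text{equivalently}\qquad\|P_{x,y}P_{y,z}\|_{op}^{2}=\|C\|_{op}^{4}.
\]
The finite-size criterion thus reduces the entire problem to the high-probability bound $(2k-2)\,\|C\|_{op}^{4}\leq 1-\delta$.

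Finally, I bound $\|C\|_{op}^{4}$ by a Schatten norm and invoke Gaussian concentration. Writing $C=G/\sqrt{\mathrm{Tr}(G^{2})}$ and using monotonicity $\|C\|_{op}\leq \|C\|_{2p}$ for every integer $p\geq 1$, the task reduces to controlling $\mathrm{Tr}(G^{2p})$ and $\mathrm{Tr}(G^{2})$. The rescaled functions $G\mapsto \mathrm{tr}(G^{2p})^{1/(2p)}=\|G\|_{2p}/d^{1/(2p)}$ are $d^{-1/(2p)}$-Lipschitz with respect to the Hilbert--Schmidt norm; combined with the $1/d$ variance of the GOE entries, this yields Gaussian concentration of the form $\Pr(|\cdot-\Ex\cdot|>t)\leq 2e^{-cd^{1+1/p}t^{2}}$, and the case $p=1$ governing $\mathrm{Tr}(G^{2})$ produces exactly the $e^{-d^{2}\delta^{2}/4}$ probability of the theorem. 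Choosing $p$ to be the minimizer in the definition~\eqref{eq:mddefn} of $m_{d}$ and combining $(\mathrm{Tr}\,G^{2p})^{1/(2p)}\leq d^{1/(2p)}(m_{d}+\delta/2)$ with $\mathrm{Tr}(G^{2})\geq d(1-\delta)$, the Schatten bound becomes
\[
\|C\|_{op}^{4}\leq\frac{(m_{d}^{2}+\delta/2)^{2}}{d\,(1-\delta)}
\]
on the good event $\Omega$. Inserted into the finite-size criterion, this is exactly the hypothesis~\eqref{eq:main1condition}, which completes the argument.

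The main obstacles I anticipate are twofold. First, the bounded-degree generalization of Knabe's 1D criterion (Proposition~\ref{prop:fs}) must be verified: the standard $H_\Lambda^{2}\succeq \gamma H_\Lambda$ argument has to be adapted to arbitrary branching geometries, with careful accounting for the $2(k-1)$ cross terms $P_{e}P_{e'}$ per edge. Second, the Schatten index $p$ must be chosen so as to simultaneously achieve the sharp concentration scale $d^{2}$ in the exponent (naturally tied to $p=1$) and the optimal coefficient $m_{d}^{2}$ in the moment bound (tied to the minimizer in \eqref{eq:mddefn}); this likely requires a joint analysis of the ratio $\mathrm{Tr}(G^{2p})/\mathrm{Tr}(G^{2})^{p}$ rather than controlling the numerator and denominator separately.
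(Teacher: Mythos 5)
Your overall strategy coincides with the paper's: reduce to the two-leg criterion (Proposition~\ref{prop:fs}), compute the overlap of adjacent projections in closed form via the coefficient matrix $C$, and control the resulting random-matrix quantities by Gaussian concentration. Your closed-form computation $\|P_{x,y}P_{y,z}P_{y,z}\ldots\|$, i.e.\ $\|P_{1,2}P_{2,3}P_{1,2}\|=\|C\|^4$, is correct and matches the paper's Lemma on $\|P_{1,2}P_{2,3}\|=\|C\|^2$. However, there is a genuine inconsistency in the powers that breaks the written chain of implications. The two-leg criterion needs $\Delta_3>\tfrac{2k-3}{2k-2}$, and since $\Delta_3\geq 1-\|P_{1,2}P_{2,3}-P_{1,2}\wedge P_{2,3}\|$ (principal-angle/Fannes--Nachtergaele--Werner bound), the requirement is $(2k-2)\,\|P_{1,2}P_{2,3}\|<1$ to the \emph{first} power, i.e.\ $(2k-2)\,\|C\|^2<1$ --- not $(2k-2)\,\|P_{1,2}P_{2,3}\|^2=(2k-2)\,\|C\|^4<1$ as you state. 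Your final display, $\|C\|^4\leq (m_d^2+\delta/2)^2/(d(1-\delta))$, is precisely the concentration bound for $\|C\|^2=\|G\|^2/\mathrm{Tr}(G^2)$ (numerator $\leq(m_d+\epsilon)^2$, denominator $\geq d(1-\delta)$); what actually follows for $\|C\|^4$ is the much smaller $(m_d+\epsilon)^4/(d^2(1-\delta)^2)$. The two slips cancel so that you land on condition~\eqref{eq:main1condition}, but as written the argument establishes $(2k-2)\|C\|^4<1$, which does \emph{not} imply the needed $(2k-2)\|C\|^2<1$ since $\|C\|\leq 1$. The fix is simply to carry the first power throughout; then your concentration bound is exactly what is required. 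Relatedly, you silently drop the term $P_{1,2}\wedge P_{2,3}$: the gap bound for two projections involves $\|P_{1,2}P_{2,3}-P_{1,2}\wedge P_{2,3}\|$, and one must verify that the intersection of ranges is trivial, which holds almost surely because it forces $v=u\otimes u$ (the paper proves this as a separate lemma).

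On the concentration step, your worry about choosing $p$ is resolved in the paper by decoupling the two roles of the moments: one applies Gaussian (L\'evy-type) concentration to the $1$-Lipschitz function $\|G\|_{\mathrm{op}}$ around its \emph{mean}, and then bounds the mean by $\mathbb{E}\|G\|\leq\sqrt{\mathbb{E}\|G^2\|}\leq\min_p(\mathbb{E}\,\mathrm{tr}\,G^{2p})^{1/(2p)}=m_d$ via Jensen and Schatten-norm monotonicity. No concentration of $(\mathrm{Tr}\,G^{2p})^{1/(2p)}$ itself is needed, so the $p$-dependent rate $e^{-cd^{1+1/p}t^2}$ you anticipate never enters, and a plain union bound over the numerator and denominator events suffices --- no joint analysis of the ratio $\mathrm{Tr}(G^{2p})/\mathrm{Tr}(G^2)^p$ is required.
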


\begin{remark}\label{rmk:main1}
    \begin{enumerate}[label=(\roman*)]
    \item The constant $c$ is also explicit and can be read off from the proof. 
      
       \item For $k=2$, the condition on $d$ to obtain a gap with probability at least  99\% is $d\geq 15$. For $k=3$, it is $d\geq 24$. 
       \item Multi-level optical cavity experiments have reached local dimensions up to $d=12$ \cite{sundar2024driven,PhysRevResearch.6.L032072}. 
        
    \end{enumerate}
\end{remark}

In order to estimate $m_d$, we provide some facts about moments of GOE matrices.

\begin{proposition}[Bounds on $m_d$]\label{prop:md}
    We have
\begin{equation}\label{eq:Ledoux}
      \begin{aligned}
    \mathbb{E}[\tr(G^{2})] &= d^2 + d\\
      \mathbb{E}[\tr(G^{4})] &= 2 d^3 + 5 d^2 + 5 d\\
        \mathbb{E}[\tr(G^{6})] &= 5d^4 +22 d^3 +52 d^2 +41d\\
      \mathbb{E}[\tr(G^{8})] &= 14d^5 + 93d^4 + 374d^3 + 690d^2 + 509d
    \end{aligned}
\end{equation}
and generally for any $d>1$
\begin{equation}\label{eq:mdbound}
      \begin{aligned}
      m_d\leq 2e\sqrt{\lceil\log d\rceil}.
   \end{aligned}
\end{equation}
\end{proposition}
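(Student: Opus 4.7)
The proposition has two disjoint parts and I would handle them separately. For the four explicit identities in \eqref{eq:Ledoux}, the natural route is a direct Wick (Isserlis) expansion. Writing
\[
\mathbb{E}[\tr(G^{2p})] = \sum_{i_1, \dots, i_{2p}} \mathbb{E}[G_{i_1 i_2} G_{i_2 i_3} \cdots G_{i_{2p} i_1}]
\]
and using the GOE two-point function $\mathbb{E}[G_{ij}G_{k\ell}] \propto \delta_{ik}\delta_{j\ell} + \delta_{i\ell}\delta_{jk}$, every one of the $(2p-1)!!$ perfect pairings of the $2p$ factors collapses the index sum to $d^{c(\pi)}$, where $c(\pi)$ is the number of independent indices surviving after the delta identifications (geometrically, the number of faces in the corresponding polygon gluing). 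For $p \leq 4$ this enumeration is short, mechanical, and yields the stated polynomials; equivalently, one may cite the Harer-Zagier-type recursion for GOE, from which these specific low-order cases drop out immediately.

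For the uniform bound \eqref{eq:mdbound}, my plan is to bound a single moment crudely and then optimize $p$. Using either the enumeration above with $(2p-1)!! \leq (2p)^p$, or (equivalently) the standard Gaussian concentration estimate for $\|G\|_{\mathrm{op}}$, one obtains an inequality of the schematic form
\[
\mathbb{E}[\tr(G^{2p})] \leq d \cdot (A^2 p)^p
\]
for an explicit constant $A$ independent of $d$ and $p$. Taking $2p$-th roots gives
\[
\big(\mathbb{E}[\tr(G^{2p})]\big)^{1/(2p)} \leq A\sqrt{p} \cdot d^{1/(2p)},
\]
and the key optimization is to set $p = \lceil \log d \rceil$, which makes $d^{1/(2p)} \leq \sqrt{e}$. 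Combining these yields $m_d \leq A\sqrt{e}\,\sqrt{\lceil \log d \rceil}$, and tracking $A$ through Stirling's inequality for $(2p-1)!!$ gives a prefactor that one can straightforwardly bound by $2e$.

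The main obstacle is really only bookkeeping of constants, since the stated bound is rather loose (the true asymptotics give $m_d$ bounded by an absolute constant as $d\to\infty$, so almost any reasonable estimate succeeds). The one subtlety to keep in mind throughout is the normalization of $G$: the polynomial identities in \eqref{eq:Ledoux} are most cleanly stated in the variance-one GOE convention, whereas the definition of $m_d$ in \eqref{eq:mddefn} is tied to the variance-$1/d$ convention used to define $P$. Once the scaling between these two conventions is tracked consistently, both parts reduce to routine calculations.
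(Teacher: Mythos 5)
Your proposal is correct and follows essentially the same route as the paper: the explicit identities \eqref{eq:Ledoux} are simply quoted from Ledoux's recursion formula (your Harer--Zagier-type citation), and the uniform bound \eqref{eq:mdbound} is obtained exactly as you describe, by bounding $\mathbb{E}[\tr(G^{2p})]\leq d\,\frac{(2p)!}{2^p p!}$ via a Wick/pairing expansion, applying Stirling, and choosing $p=\lceil\log d\rceil$ so that $d^{1/(2p)}\leq\sqrt{e}$. The one step you leave implicit is the face-count bound $c(\pi)\leq p+1$ for every pairing (this is what produces the single factor of $d$ in your schematic inequality $\mathbb{E}[\tr(G^{2p})]\leq d\,(A^2p)^p$); the paper proves it via the triangle inequality for the transposition distance on $S_{2p}$, namely $\ell(\sigma^{-1}\mathrm{cyc})\leq 2p+\ell(\mathrm{cyc})-\ell(\sigma)\leq p+1$ for pair partitions $\sigma$, and your closing remark about tracking the variance-$1$ versus variance-$1/d$ conventions correctly identifies a normalization subtlety that the paper glosses over.
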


Theorem \ref{thm:main1} and \eqref{eq:mdbound} are proved in Section \ref{sec:proof}. 
The identities \eqref{eq:Ledoux} are taken from a paper of Ledoux \cite{ledoux2009recursion} who derives them from an explicit recursion formula for the moments of a GOE matrix. His recursion formula can also be used to calculate many higher moments. Notice that the leading coefficients in \eqref{eq:Ledoux} are the Catalan numbers as they have to be.

The bounds on $m_d$ obtained from \eqref{eq:Ledoux} work better in smaller dimensions $d$, whereas \eqref{eq:mdbound} is clearly stronger for large dimension. In particular, for large local dimension $d$ and degree $k$, we can use \eqref{eq:mdbound} on \eqref{eq:main1condition} to obtain the asymptotic condition
\[
\frac{d}{\log d}\gtrsim \frac{8e^2}{1-\delta} k
\]
This means that, up to log-corrections, the critical local dimensions to ensure a gap with high probability grows linearly in the degree of the graph.



\begin{remark}[Generalizations]\label{rmk:generalizations}
    Above, we choose the coefficient matrix $G$ to be a GOE matrix, which is the natural choice for a generic symmetric, real-valued matrix. The symmetry of $G$ is equivalent to $P$ commuting with the swap operator, i.e., to edges being undirected and the physical meaning of $G$ being real-valued is that it is time-reversal invariant.  Both of these assumptions can be removed if desired and many other generalizations of our proof are possible. For example, one could replace $G$ by a GUE or CUE matrix, which would make it neither symmetric nor real-valued. To adapt the proof, one can further generalize the Knabe-type argument to directed graphs of bounded degree. Moreover, using random matrix universality  results, one could change the coefficient matrix $G$ to be non-Gaussian. This would mainly affect the quantitative estimates for small $d$-values.
\end{remark}

\subsection{Result 2: Gapped Hamiltonians from Haar projectors}
Whereas our first result \autoref{thm:main1} holds with high probability for random rank-$1$ Hamiltonians, we can use similar methods to prove that a class of deterministic Hamiltonians, again defined on any bounded-degree graph, is also gapped for sufficiently large local dimension. 
The explicit construction is based on Haar projectors.

The models depend on two independent integer parameters $q,t\geq 1$. We focus on the regime $q>t$, in which case the relation to the local dimension $d$ and interaction rank $r$ from the previous sections will be as follows:
\[
d=q^{4t},\qquad r=q^{4t}-t!
\]
Since we consider large $q$-values, we see that the local interactions will not only have large local dimension $d$ but in fact large rank $r$, in contrast to the random rank-$1$ models we considered before.

The local Hilbert spaces are $\C^{q^{2t}}$, i.e., $2t$ copies of a single $q$-dimensional Hilbert space. As before, we let $\Lambda$ denote a finite graph and consider a Hamiltonian of the form \eqref{eq:introH}, i.e.,
\begin{equation}\label{eq:HLambda2}
    H_\Lambda = \sum_{\substack{x,y\in \Lambda\\ x\sim y}} P_{x,y}\,, \where P_{x,y} = P\otimes \mathrm{Id}_{\Lambda\setminus \{x,y\}},
\end{equation}
where  $P$ acts on $\C^{q^{2t}}\otimes \C^{q^{2t}} $ and is defined as
\begin{equation}\label{eq:Qdefn}
    P=Q^\perp,\qquad Q = \int_{\mathrm{U}(q^2)}  U^{\otimes t}\otimes \overline{U}^{\otimes t} \mathrm{d}\mu_H(U).
\end{equation}
with $ \mathrm{d}\mu_H$ the Haar measure over the unitary group $\mathrm{U}(q^2)$.
The fact that $Q^2=Q$ (and thus also $P^2=P$) follows from the left/right invariance of the Haar measure.

We remark that the Hamiltonian \eqref{eq:HLambda2} defined via Haar projectors $Q$ in \eqref{eq:Qdefn} is motivated by the study of random quantum circuits and unitary designs, where  one is interested in the convergence of random circuits to approximate unitary $t$-designs, distributions on the unitary group which equal or are close to the Haar measure up to the $t$-th moment. One standard approach to this is to study the gap of the $t$-th moment operator of the distribution, which controls the convergences \cite{brandao2016local,haferkamp2021improved}. 

\begin{lemma}[Lemma 17 in \cite{brandao2016local}]\label{lm:2ff}
    $H_\Lambda$ defined by \eqref{eq:HLambda2} and \eqref{eq:Qdefn} is frustration-free.
\end{lemma}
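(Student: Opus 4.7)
The goal is to exhibit a nonzero $|\Psi\rangle$ in the joint kernel of all $P_{x,y}=I-Q_{x,y}$, which by positivity of each $P_{x,y}$ gives $H_\Lambda|\Psi\rangle=0$ and hence $\ker(H_\Lambda)\neq\{0\}$. Equivalently, one needs $Q_{x,y}|\Psi\rangle=|\Psi\rangle$ on every edge.

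First I would make the replica structure of the local Hilbert space explicit by writing
\[
\mathcal H_v\;\cong\;(\C^q)^{\otimes t}_{\mathrm{phys},v}\otimes(\C^q)^{\otimes t}_{\mathrm{conj},v},
\]
splitting the $2t$ copies of $\C^q$ on site $v$ into $t$ ``physical'' and $t$ ``conjugate'' replicas. On a bond, regrouping tensor factors yields
\[
\mathcal H_x\otimes\mathcal H_y\;\cong\;(\C^{q^2})^{\otimes t}_{\mathrm{phys\ bond}}\otimes(\C^{q^2})^{\otimes t}_{\mathrm{conj\ bond}},
\]
where each $\C^{q^2}$ in the physical bond pairs the $k$-th physical qudit of $x$ with the $k$-th physical qudit of $y$. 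Under this identification the operator $U^{\otimes t}\otimes\overline U^{\otimes t}$ in \eqref{eq:Qdefn} acts with $U^{\otimes t}$ on the physical bond register and $\overline U^{\otimes t}$ on the conjugate bond register, with each $U\in\mathrm U(q^2)$ acting on a single $\C^{q^2}$ factor.

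The candidate ground state is the translation-invariant product
\[
|\Psi\rangle=\bigotimes_{v\in\Lambda}|I_v\rangle,\qquad
|I_v\rangle:=\sum_{\vec a\in\{1,\ldots,q\}^t}|\vec a\rangle_{\mathrm{phys},v}\otimes|\vec a\rangle_{\mathrm{conj},v}.
\]
To verify $Q_{x,y}|\Psi\rangle=|\Psi\rangle$, I would invoke the standard factorization identity: if $A\cong A_1\otimes A_2$ and $B\cong B_1\otimes B_2$ share compatible product bases, the unnormalized maximally entangled state $|\Omega\rangle_{A,B}=\sum_i|i\rangle_A|i\rangle_B$ equals $|\Omega\rangle_{A_1,B_1}\otimes|\Omega\rangle_{A_2,B_2}$. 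Applied with $A_i,B_i$ the physical and conjugate replica registers at $i\in\{x,y\}$, this identifies $|I_x\rangle\otimes|I_y\rangle$ with the $q^{2t}$-dimensional maximally entangled state between the physical and conjugate bond registers. Such a state is fixed by $V\otimes\overline V$ for every $V\in\mathrm U(q^{2t})$, and in particular for $V=U^{\otimes t}$ with $U\in\mathrm U(q^2)$; averaging over Haar yields $Q_{x,y}(|I_x\rangle\otimes|I_y\rangle)=|I_x\rangle\otimes|I_y\rangle$, and hence $Q_{x,y}|\Psi\rangle=|\Psi\rangle$.

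The argument is purely bond-local and makes no use of the graph $\Lambda$, so frustration-freeness holds universally. The one subtle point is the bookkeeping in the first step: $|I_v\rangle$ must be constructed so that it simultaneously supplies ``half of the bond maximally entangled state'' for every edge incident to $v$, and the phys/conj decomposition of $\mathcal H_v$ is precisely what makes this compatibility possible; the factorization identity then carries the proof through edge-by-edge.
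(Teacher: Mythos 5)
Your proof is correct and follows essentially the same route as the paper: it exhibits an explicit product ground state built from maximally entangled states between the physical and conjugate replicas at each site, and verifies bond-by-bond invariance under $U^{\otimes t}\otimes\overline{U}^{\otimes t}$ via the standard $(V\otimes\overline{V})\ket{\Omega}=\ket{\Omega}$ identity. The only difference is that the paper writes down the whole family $\ket{\psi_\sigma}^{\otimes n}$, $\sigma\in S_t$ (your state is the $\sigma=\mathrm{id}$ member), which is used later to describe the full ground space, but a single nonzero kernel vector is all the lemma requires.
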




Our second main result gives parameter regimes when $H_\Lambda$ defined by \eqref{eq:HLambda2} and \eqref{eq:Qdefn} is gapped.

\begin{theorem}[Second main result]
\label{thm:main2}
Let $\Lambda$ be a finite graph of maximal degree $k$. Then $H_\Lambda$ defined by \eqref{eq:HLambda2} and \eqref{eq:Qdefn} satisfies
\[
\Delta_\Lambda \geq  1-\frac{4(k-1)t^2}{q}.
\]
\end{theorem}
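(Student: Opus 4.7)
The plan is to invoke the generalized Knabe-type finite-size criterion (Proposition \ref{prop:fs2}) in order to reduce the global gap to a local gap on pairs of adjacent edges (``cherries''), and then to control this local gap using the Schur--Weyl/Weingarten picture of the Haar projector $Q$. The only genuinely Haar-specific input is a single approximate-orthogonality estimate between two overlapping twirls $Q_{x,y}$ and $Q_{y,z}$ sharing a vertex.

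First I would establish the operator inequality $H_\Lambda^2 \geq c\, H_\Lambda$, which immediately forces $\Delta_\Lambda \geq c$. Expanding
\[
H_\Lambda^2 \;=\; H_\Lambda \;+\; \sum_{e\cap e'=\emptyset} P_e P_{e'} \;+\; \sum_{\{e,e'\}\,\mathrm{adjacent}} \bigl(P_eP_{e'}+P_{e'}P_e\bigr),
\]
the disjoint pairs contribute a non-negative operator (commuting projectors). For an adjacent pair sharing a vertex, the identity $(P_e+P_{e'})^2-(P_e+P_{e'}) = P_eP_{e'}+P_{e'}P_e$ combined with $(P_e+P_{e'})^2\geq \gamma_{ee'}(P_e+P_{e'})$ gives
\[
P_eP_{e'}+P_{e'}P_e \;\geq\; (\gamma_{ee'}-1)(P_e+P_{e'}),
\]
where $\gamma_{ee'}$ is the smallest positive eigenvalue of the cherry Hamiltonian $P_e+P_{e'}$. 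Since each edge has at most $2(k-1)$ neighboring edges, summation over adjacent pairs yields $H_\Lambda^2\geq(1-2(k-1)(1-\gamma))H_\Lambda$, with $\gamma$ the worst local cherry gap, and therefore $\Delta_\Lambda \geq 1-2(k-1)(1-\gamma)$.

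Next I would identify $\gamma$ on a cherry $(x,y),(y,z)$. Writing $P_e=\iden-Q_e$, a standard Jordan-lemma calculation for two orthogonal projectors gives
\[
\gamma \;=\; 1-\|Q_{x,y}Q_{y,z}-R\|_\infty,
\]
where $R$ is the projection onto $\mathrm{range}(Q_{x,y})\cap \mathrm{range}(Q_{y,z})$. The proof then reduces to the single norm estimate $\|Q_{x,y}Q_{y,z}-R\|_\infty\leq 2t^2/q$, which inserted into the finite-size bound gives exactly $\Delta_\Lambda\geq 1-4(k-1)t^2/q$.

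The main technical step is therefore this last estimate. Using $Q=\int U^{\otimes t}\otimes \overline U^{\otimes t}d\mu_H(U)$ with $U\in\mathrm{U}(q^2)$, the range of $Q_{x,y}$ is spanned by the $t!$ permutation vectors $|\phi_\sigma\rangle_{x,y}=\sum_{i_1,\ldots,i_t=1}^{q^2}|i_1\cdots i_t\rangle_x\otimes|i_{\sigma(1)}\cdots i_{\sigma(t)}\rangle_y$, with Gram matrix $\langle\phi_\sigma|\phi_\tau\rangle=(q^2)^{c(\sigma^{-1}\tau)}$, and analogously for $Q_{y,z}$. Computing the matrix elements of $Q_{x,y}Q_{y,z}$ in these two bases via the Weingarten expansion, the leading $(q^2)^{-t}$ contribution reproduces $R$; the remainder is controlled by subleading Weingarten corrections, each of which is $O(1/q^2)$. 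A combinatorial bookkeeping over pairs of non-identity permutations in $S_t$ converts the $t!$ basis size into an effective $O(t^2)$ prefactor, delivering the claimed $2t^2/q$ bound. The principal obstacle is obtaining this sharp $t^2/q$ scaling with a clean constant rather than a cruder $f(t)/q$ estimate; qualitative versions of this Haar approximate-orthogonality are well known in the unitary-design literature \cite{brandao2016local,haferkamp2021improved}, but extracting the explicit prefactor requires the direct Weingarten computation.
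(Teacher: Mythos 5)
Your overall architecture is sound and parallels the paper's: a Knabe-type finite-size reduction on a bounded-degree graph, followed by a local gap estimate based on the approximate orthogonality of permutation states at large $q$. You differ in which criterion you use: you rederive the two-leg criterion (Proposition~\ref{prop:fs}, threshold $\tfrac{2k-3}{2k-2}$, requiring only the $3$-site path gap $\Delta_3$), whereas the paper uses the star-graph criterion (Proposition~\ref{prop:fs2}, threshold $\tfrac12$, requiring the gaps of star graphs up to $k$ legs, bounded in Proposition~\ref{prop:star}). Your reduction itself is correct, including the reformulation of the cherry gap via the complementary projectors, since the principal angles between $\mathrm{ran}\,P_{x,y}$ and $\mathrm{ran}\,P_{y,z}$ coincide with those between $\mathrm{ran}\,Q_{x,y}$ and $\mathrm{ran}\,Q_{y,z}$. (You should also invoke frustration-freeness, Lemma~\ref{lm:2ff}, to pass from $H_\Lambda^2\geq cH_\Lambda$ to $\Delta_\Lambda\geq c$; this is minor.)

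The genuine gap is the one estimate on which everything rests: $\|Q_{x,y}Q_{y,z}-R\|_\infty\leq 2t^2/q$. First, because your criterion has the worse threshold, you need $\Delta_3\geq 1-2t^2/q$, which is a factor of $2$ \emph{sharper} than what the paper itself proves for the $3$-site graph (Proposition~\ref{prop:star} with $n=2$ gives only $\Delta_3\geq 1-4t^2/q$); with the weaker, available bound your route yields $\Delta_\Lambda\geq 1-8(k-1)t^2/q$, not the claimed constant. So the constant cannot simply be imported from the design literature, and you explicitly concede that extracting it is ``the principal obstacle'' — i.e., the crucial step is asserted, not proven. Second, your sketch of that step is internally inconsistent: you attribute the error to ``subleading Weingarten corrections, each of which is $O(1/q^2)$,'' but if all corrections were $O(1/q^2)$ the final bound would scale as $t^2/q^2$, contradicting your own claim. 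The dominant $O(1/q)$ error does not come from Weingarten subleading terms of the $\mathrm{U}(q^2)$ twirl; it comes from the non-orthogonality of the single-site permutation states $\ket{\psi_\sigma}$ at the shared vertex $y$, whose normalized overlaps are $q^{c(\sigma^{-1}\tau)-t}\leq 1/q$. Controlling the resulting off-diagonal sum $\sum_{\sigma\neq\tau}q^{c(\sigma^{-1}\tau)-t}\,\ket{u_\sigma u_\sigma u_\tau}\bra{u_\sigma u_\tau u_\tau}$ with an explicit $t^2/q$ prefactor requires the synthesis-operator argument via $\|B^\dagger B\|$ and the bound $\sum_{\sigma\neq e}q^{c(\sigma)-t}=\prod_{j=1}^{t-1}(1+j/q)-1$, together with the two-site estimate \eqref{eq:17b}; none of this is carried out, so the proof as written does not establish the theorem.
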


Hence, for $q>4(k-1)t^2$, the gap is bounded from below uniformly.
These results are proved in Section \ref{sec:2proof}.

 \subsection{Application to the 2D area law}\label{ssect:ALapplication}
 As an application of our results, we obtain random Hamiltonians with ground states satisfying the area law with high probability. We recall that verifying the 2D area law is notoriously difficult in the context of PEPS techniques. In our case, the key insight is that the finite-size criterion automatically lower bounds even the ``local gap'' which is the key input to the area law criterion for 2D grids by Anshu, Arad, and Gosset \cite{anshu2021area}. (A difference to the setting considered therein is that the ground state space of our random spin chain Hamiltonians is highly degenerate and this point was already addressed in \cite{jauslin2022random}.)  Consequently, our result gives new examples of 2D Hamiltonians that provably have ground states satisfying an area law, which are generally difficult to construct by other methods.

We recall the setup for the area law. Let $L\geq 2$ and take the graph $\Lambda=((-L,L]\cap \Z)^2$ to be a box in $\Z^2$. We split the Hilbert space into left and right halves in the natural way
$$
\mathcal H=\mathcal H_{\mathrm{left}}\otimes \mathcal H_{\mathrm{right}}.
$$
Given a pure ground state $\psi$ of $H_\Lambda$, we consider the marginal on the left, given by taking the partial trace over the right side,
$$
\rho_{\psi}^{\mathrm{left}}=\mathrm{Tr}_{\mathcal H_{\mathrm{right}}}\vert\psi\rangle\langle\psi\vert
$$
By definition, the entanglement entropy of the left and right halves equals
\begin{equation}
S(\rho_{\psi}^{\mathrm{left}})=-\mathrm{Tr}\rho_{\psi}^{\mathrm{left}}\log\rho_{\psi}^{\mathrm{left}}.
 \end{equation}



\begin{corollary}[2D area law]\label{cor:AL}
In the settings of Theorem \ref{thm:main1} and Theorem \ref{thm:main2}, specialized to $\Lambda=((-L,L]\cap \Z)^2$ a box in $\Z^2$, we have that for every $\omega\in \Omega$ $H_\Lambda$ has a ground state satisfying the area law
\begin{equation}\label{eq:AL}
S(\rho_{\psi}^{\mathrm{left}})\leq cL^{1+(\log L)^{-1/5}}.
 \end{equation}
\end{corollary}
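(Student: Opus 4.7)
The plan is to combine our spectral gap results with the 2D area law of Anshu--Arad--Gosset \cite{anshu2021area}, which takes as its essential hypothesis a uniform lower bound on the ``local gap'' of the Hamiltonian, i.e., a gap that survives when one restricts to any subregion. First I would observe that for $\Lambda=((-L,L]\cap\Z)^2$ the maximal degree is $k=4$, so Theorem \ref{thm:main1} and Theorem \ref{thm:main2} already provide event-wise uniform spectral gaps $\Delta_\Lambda\geq c>0$. The key point to emphasize is that our gap derivations proceed through the Knabe-type finite-size criteria (Proposition \ref{prop:fs} and Proposition \ref{prop:fs2}), which are purely local: they certify a gap by checking a condition on bounded patches of the graph and hence automatically apply to every induced subgraph of $\Lambda$ with the \emph{same} constant $c$. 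This is exactly the ``local gap'' input required by the AAG area law machinery.

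Second, I would handle the ground-state degeneracy by following the strategy of \cite{jauslin2022random}. The original AAG statement is phrased for unique ground states, but in our setting the frustration-free constraint \eqref{eq:FFconstraint} (respectively Lemma \ref{lm:2ff}) leaves a large ground space. The adaptation in \cite{jauslin2022random} shows how to select a particular ground state $\psi$ whose reduced density matrix $\rho_\psi^{\mathrm{left}}$ inherits the AAG bound; concretely, one applies the AAG argument to the projector onto the ground-state subspace on nested subregions and then extracts a pure ground state via a frustration-free trimming procedure. Since our local gaps are graph-independent, this procedure transfers essentially verbatim.

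Finally, once the local gap and the degenerate AAG variant are in place, the area law bound
\[
S(\rho_\psi^{\mathrm{left}})\leq cL^{1+(\log L)^{-1/5}}
\]
is simply the quantitative output of \cite{anshu2021area}, whose subpolynomial correction $(\log L)^{-1/5}$ we inherit without modification; the constant $c$ depends only on $k=4$ and on the local dimension (either $d$ from Theorem \ref{thm:main1} or $q^{2t}$ from Theorem \ref{thm:main2}).

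The main obstacle, as anticipated, is the degeneracy of the ground space: verifying that the argument of \cite{jauslin2022random} transports to our setting requires checking that their trimming construction only uses the local gap and frustration-freeness (both of which we have uniformly) and does not rely on features specific to the hypercubic-lattice rank-$r$ models considered there. Given the robustness of our finite-size criteria across bounded-degree graphs, I expect this verification to be routine rather than delicate.
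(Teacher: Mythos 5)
Your proposal follows essentially the same route as the paper: the finite-size criteria (Propositions \ref{prop:fs} and \ref{prop:fs2}) are observed to lower-bound the local gap uniformly over subregions, which is then fed into the degenerate-ground-space variant of the Anshu--Arad--Gosset area law as carried out in \cite{jauslin2022random}. The paper itself only sketches this and defers to Corollary 2.10 of \cite{jauslin2022random}, so your account is, if anything, slightly more detailed than the paper's, and it is correct.
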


We remark that the mild logarithmic correction in \eqref{eq:AL} is unimportant and can be ignored for all practical purposes.

The proof of Corollary \ref{cor:AL} is completely analogous to the proof of Corollary 2.10 in \cite{jauslin2022random} and we therefore omit it. In a nutshell, the idea is that the proof of the finite-size criteria Proposition \ref{prop:fs} and Proposition \ref{prop:fs2}  in fact yields a lower bound on the so-called local gap (cf., Def.\ 2.5 and Prop.\ 2.11 in \cite{jauslin2022random}). The lower bound on the local gap can then be fed into the area law criterion from \cite{anshu2021area}; more specifically its analog for degenerate ground states, Theorem 2.8 in \cite{jauslin2022random}, which is due to \cite{anshu}. We leave the details to the reader.

\subsection{Proof strategy}
To prove \autoref{thm:main1}, we proceed as follows. 
First, we ensure frustration-freeness by employing a QSAT (quantum satisfiability) criterion of \cite{sattath2016local}. This requires as input a classical cluster expansion bound that follows from the Kotecky-Preiss criterion, cf.\ \cite{jauslin2022random}. 
Thanks to frustration-freeness, we can then employ the method of finite-size criteria. A finite-size criterion  says that if the gap of suitably small subsystems exceed some explicit threshold, the Hamiltonians are gapped uniformly in the system size.   Finite-size criteria have recently been used to derive spectral gaps in random Hamiltonians \cite{lemm2019gaplessness,lancien2022correlation,jauslin2022random}  and a number of other frustration-free models  \cite{lemm2019gapped,pomata2019aklt,abdul2020class,lemm2020existence,pomata2020demonstrating,nachtergaele2021spectral,haferkamp2021improved,andrei2022spin,warzel2023bulk}. Here, we generalize Knabe's original criterion \cite{knabe1988energy} to general graphs of bounded degree; see Propositions \ref{prop:fs} and \ref{prop:fs2}. See \cite{gosset2016local,lemm2019spectral,lemm2019gapped,lemm2020finite,lemm2020existence,nachtergaele2021spectral,lemm2022quantitatively} for other improvements of Knabe's criterion. We verify the finite-size condition analytically. This relies on certain algebraic identities that relate the angle between neighboring projections to the entanglement of the rank-$1$ vector $v$, as well as explicit concentration bounds for random matrices. For the latter, we take care to avoid asymptotic results with non-explicit constants. Instead, we use hands-on estimates in order to conclude existence of a gap for Hamiltonians with explicit, physically meaningful parameter values. Specifically, we care about obtaining reasonable estimates on the probability of having a gap also for relatively small values of the local qudit dimension $d$.

To prove \autoref{thm:main2}, we use a slightly different approach. Again, we use a generalized Knabe finite-size criterion to relate the $n$-site Hamiltonian gap to the gap of the Hamiltonian on star graphs which is due to \cite{mittal2023local}. Their ground space can be understood in terms of permutation operators. We may then verify the finite-size criterion using the approximate orthogonality of permutation operators in large dimension \cite{brandao2016local,haferkamp2021improved}.

The overarching idea of our paper is that the robustness with respect to the graph structure is a consequence of using a highly local finite-size criterion. The locality is what makes the finite-size criterion rough, but robust.

\section{Finite-size criteria on bounded-degree graphs}
In the present section, we consider general Hamiltonians of the form \eqref{eq:HLambda}, i.e.,
\[
H_\Lambda=\frac{1}{2}\sum_{\substack{x,y\in\Lambda:\\ x\sim y}} P_{x,y}
\]
We present two finite-size criteria in the spirit of Knabe \cite{knabe1988energy} assuming  that $H_\Lambda$ is frustration-free and that $\Lambda$ has bounded maximal degree $k$.  Both criteria provide lower bounds on the spectral gap $\Delta_\Lambda$ which only depends on the maximal degree $k$, not on any other details of the graph. The latter criterion already appeared in \cite{mittal2023local} and we include the short proof here. Both proofs are direct generalizations of Knabe's combinatorial argument \cite{knabe1988energy} to non-constant vertex degree. 

Later in the paper, we will apply the first finite-size criterion  to prove our first main result, Theorem \ref{thm:main1}, and we will apply the second finite-size criterion to prove our second main result, Theorem \ref{thm:main2}. In both applications, frustration-freeness is ensured by a different argument. 

For notational convenience, in this section, all sums run over $\Lambda$ unless otherwise stated. Given a fixed vertex $x$, we write $\sum_{y\sim x}$ for the sum over nearest-neighbors $y$ of $x$.

\subsection{Criterion 1: The two-leg criterion}
Let $\Delta_\Lambda$ denote the gap of $H_\Lambda$. Let $\Delta_3$ denote the gap of the ``two-leg'' Hamiltonian
 \[
 H_3=P_{1,2}+P_{2,3}
 \]
Notice that the frustration-freeness of any $H_\Lambda$ of degree $\geq 2$ implies that $H_3$ is frustration-free.

\begin{proposition}
[Two-leg criterion]
\label{prop:fs} Let $k\geq 2$. Let $\Lambda$ be a graph of maximal degree $k$ such that $H_\Lambda$ is frustration-free. Then
\begin{equation}
    \Delta_\Lambda \geq 2(k-1) \left(\Delta_3-\frac{2k-3}{2k-2}\right).
\end{equation}
\end{proposition}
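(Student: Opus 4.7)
The plan is a Knabe-style square-and-compare argument, adapted so that the combinatorics depend on $\Lambda$ only through the maximum degree $k$. Because $H_\Lambda$ is frustration-free, it suffices to prove an operator inequality $H_\Lambda^2 \geq \gamma \, H_\Lambda$ with $\gamma = 2(k-1)\bigl(\Delta_3 - \tfrac{2k-3}{2k-2}\bigr)$; evaluating on eigenvectors then yields $\Delta_\Lambda \geq \gamma$.

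I would begin by expanding the square, using $P_e^2 = P_e$:
\[
H_\Lambda^2 = H_\Lambda + A + D,
\]
where $A$ sums $P_e P_{e'}$ over ordered pairs of distinct edges sharing exactly one vertex (the ``cherry'' pairs) and $D$ sums over vertex-disjoint pairs. Each summand of $D$ is a product of two commuting projections acting on disjoint sites, hence itself a projection, so $D \geq 0$ may be discarded. For $A$ I would combine the two orderings of each unordered cherry $c = \{(x,y),(y,z)\}$ and complete the square:
\[
A = \sum_c \bigl(P_{x,y}P_{y,z} + P_{y,z}P_{x,y}\bigr) = \sum_c \bigl((H_3^{(c)})^2 - H_3^{(c)}\bigr).
\]
Frustration-freeness of $H_\Lambda$ transfers to each two-leg $H_3^{(c)}$ (any global zero-energy state is annihilated in particular by the two bond projections in the cherry), so $(H_3^{(c)})^2 \geq \Delta_3 \, H_3^{(c)}$ and hence $A \geq (\Delta_3 - 1)\sum_c H_3^{(c)}$.

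The final step is a geometry-robust piece of bookkeeping. Regrouping by edges, $\sum_c H_3^{(c)} = \sum_e n_e P_e$, where $n_e = (\deg u_e - 1)+(\deg v_e - 1) \leq 2(k-1)$ counts the cherries containing the edge $e = (u_e, v_e)$. In the nontrivial regime $\Delta_3 \leq 1$ the factor $\Delta_3 - 1$ is nonpositive, so the uniform \emph{upper} bound $n_e \leq 2(k-1)$ yields $A \geq 2(k-1)(\Delta_3 - 1)H_\Lambda$. Assembling,
\[
H_\Lambda^2 \geq \bigl[1 + 2(k-1)(\Delta_3 - 1)\bigr] H_\Lambda = 2(k-1)\Bigl(\Delta_3 - \tfrac{2k-3}{2k-2}\Bigr) H_\Lambda.
\]
The only delicate point is this uniform cherry-count $n_e \leq 2(k-1)$: it absorbs \emph{all} graph-dependence into the single parameter $k$, which is precisely what makes the criterion robust beyond regular lattices. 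Matching the sign of $\Delta_3 - 1$ to whether one needs an upper or lower bound on $n_e$ is a minor technicality, and I anticipate no further obstacles.
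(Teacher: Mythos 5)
Your proof is correct and follows essentially the same route as the paper's: square $H_\Lambda$, discard the non-negative disjoint-pair anticommutators, bound each cherry's anticommutator via $(P_e+P_{e'})^2\geq \Delta_3(P_e+P_{e'})$, and control the cherry count per edge by $2(k-1)$. The sign caveat you flag (needing $\Delta_3\leq 1$ so that the upper bound on $n_e$ points the right way) is present, implicitly, in the paper's argument as well and is harmless.
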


\begin{proof}[Proof of Proposition \ref{prop:fs}]
We square the Hamiltonian and drop non-negative terms, i.e.\ the anti-commutators of non-overlapping terms, to find the operator inequality
\[
H_\Lambda^2\geq H_\Lambda+\sum_x \sum_{\substack{y,z\sim x\\ y\neq z}} \{P_{x,y},P_{y,z}\}
\]
with $\{A,B\}=AB+BA$ the anti-commutator. We introduce the auxiliary operator
\[
\mathcal A =\sum_x \sum_{\substack{y,z\sim x\\ y\neq z}} (P_{x,y}+P_{y,z})^2
\]
On the one hand, we can calculate
\[
\mathcal A =2\sum_x \left((\mathrm{deg}(x)-1) \sum_{y\sim x} P_{x,y}+\sum_{\substack{y,z\sim x\\ y\neq z}}  \{P_{x,y},P_{y,z}\}\right)
\]
On the other hand, by frustration-freeness of $H_3$, we have the operator inequality
\[
\mathcal A \geq 2\Delta_3 \sum_x (\mathrm{deg}(x)-1) \sum_{y:y\sim x} P_{x,y}
\]
and so
\begin{align}
H_\Lambda^2\geq&
\sum_{x\sim y} P_{x,y}
\left(
1-2(1-\Delta_3)  (\mathrm{deg}(x)-1) 
\right)\\
\geq&(1-2(1-\Delta_3)(k-1))H_\Lambda\\
=&2(k-1)\left(\Delta_3-\frac{2k-3}{2k-2}\right)H_\Lambda
\end{align}
The claimed spectral gap inequality now follows because $H_\Lambda$ is frustration-free.
\end{proof}

\subsection{Criterion 2: The star graph criterion}
The following star-graph criterion is different from Proposition \ref{prop:fs} in two ways: The good news is that the threshold, $\tfrac{1}{2}$, is lower, but the bad news is that the gap on $3$ sites $\Delta_3$ is replaced by  the minimal gap on star graphs whose size may reach up to $k$, the degree of the original graph. 

For $m\geq 2$, we let $\Delta_m$ denote the spectral gap on the $m$-site star graph, i.e., $\Delta_{m}$ is the spectral gap of
\[
H_m=P_{1,2}+P_{1,3}+\ldots+P_{1,m}.
\]

\begin{proposition}[Star graph criterion \cite{mittal2023local}]\label{prop:fs2}
 Let $k\geq 2$. Let $\Lambda$ be a graph of maximal degree $k$ such that $H_\Lambda$ is frustration-free. Then
\begin{equation}
    \Delta_\Lambda \geq 2\bigg( \min_{2\leq m\leq k} \Delta_{m} - \frac{1}{2}\bigg)\,,
\end{equation}
\end{proposition}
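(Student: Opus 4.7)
The plan is to adapt the square-and-drop argument of \autoref{prop:fs}, but grouping the overlapping edge pairs of $H_\Lambda^2$ by their shared vertex and invoking the gap of the star Hamiltonian centered there, rather than the gap of a 3-site path. Writing $H_\Lambda=\sum_e P_e$ as a sum over unordered edges and squaring, I would first drop the non-negative contributions of pairs of edges with disjoint support, which gives
\[
H_\Lambda^2 \;\geq\; H_\Lambda \;+\; \sum_{y\in\Lambda}\,\tfrac{1}{2}\sum_{\substack{x,x'\sim y\\ x\neq x'}} \{P_{x,y},P_{x',y}\},
\]
where the inner sum runs over ordered pairs of distinct neighbors of $y$ (the factor $\tfrac12$ correcting for double counting).

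Next, for each vertex $y$ I would introduce the star Hamiltonian $H^{\mathrm{star}}_y := \sum_{x\sim y} P_{x,y}$, which is frustration-free as a restriction of $H_\Lambda$ (any zero-energy state of $H_\Lambda$ is annihilated by each $P_{x,y}$) and whose spectral gap equals $\Delta_{\deg(y)+1}$. Expanding the square isolates exactly the same inner anti-commutator sum plus $H^{\mathrm{star}}_y$, so the frustration-free identity $(H^{\mathrm{star}}_y)^2\geq\Delta_{\deg(y)+1}\,H^{\mathrm{star}}_y$ rearranges to
\[
\tfrac12\sum_{\substack{x,x'\sim y\\ x\neq x'}}\{P_{x,y},P_{x',y}\} \;\geq\; (\Delta_{\deg(y)+1}-1)\,H^{\mathrm{star}}_y.
\]
Setting $\Delta^\star := \min_{y\in\Lambda}\Delta_{\deg(y)+1}$, the bound $\Delta_{\deg(y)+1}-1\geq\Delta^\star-1$ allows uniformization of the coefficient; summing over $y$ and using the identity $\sum_{y\in\Lambda} H^{\mathrm{star}}_y = 2H_\Lambda$ (each edge contributes once from each endpoint) yields
\[
H_\Lambda^2 \;\geq\; H_\Lambda + 2(\Delta^\star-1)\,H_\Lambda \;=\; (2\Delta^\star-1)\,H_\Lambda.
\]
Frustration-freeness of $H_\Lambda$ together with the spectral mapping theorem then gives $\Delta_\Lambda\geq 2\Delta^\star-1=2(\Delta^\star-\tfrac12)$, and bounding $\Delta^\star$ below by the minimum of star gaps $\Delta_m$ over the relevant range of $m$ produces the stated inequality.

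There is no serious technical obstacle here; the argument is a direct generalization of Knabe's combinatorics to graphs of non-constant vertex degree, with stars replacing paths as the natural minimal reference subsystem. The only point that requires attention is the sign bookkeeping at the uniformization step: because $\Delta^\star-1$ is typically nonpositive, replacing each vertex-dependent coefficient $\Delta_{\deg(y)+1}-1$ by the smaller uniform value $\Delta^\star-1$ is legitimate only because each multiplier $H^{\mathrm{star}}_y$ is positive semidefinite. Relative to the two-leg criterion, the reference subsystem grows from three sites to stars of size up to the local degree, but in exchange the critical threshold sharpens from $\tfrac{2k-3}{2k-2}$ to the constant $\tfrac{1}{2}$.
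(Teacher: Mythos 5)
Your proof is correct and follows essentially the same route as the paper: square $H_\Lambda$, discard the positive anticommutators of disjoint edges, recognize the remaining overlapping anticommutators as the cross terms of the squared vertex stars $\bigl(\sum_{x\sim y}P_{x,y}\bigr)^2$, apply the star gap to each, and use $\sum_y H^{\mathrm{star}}_y=2H_\Lambda$; the only difference is organizational (you work vertex by vertex where the paper sums a single auxiliary operator $\mathcal{A}$ over vertices grouped by degree), and your remark about needing $H^{\mathrm{star}}_y\succeq 0$ at the uniformization step is exactly the right point to flag. One small mismatch to be aware of: with the paper's convention that $\Delta_m$ is the gap of the $m$-site star $H_m=P_{1,2}+\cdots+P_{1,m}$ (which has $m-1$ legs), a vertex of degree $\deg(y)$ contributes $\Delta_{\deg(y)+1}$, as you correctly write, so your argument yields $\Delta_\Lambda\geq 2\bigl(\min_{2\leq m\leq k+1}\Delta_m-\tfrac12\bigr)$ rather than the stated range $2\leq m\leq k$, and your closing sentence cannot recover the stated range since the larger minimum is the smaller quantity. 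This off-by-one sits in the paper's own statement and proof (which assigns $\Delta_{k'}$ to a degree-$k'$ vertex), not in your argument; your more careful indexing in fact produces the correct form of the bound.
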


This is essentially Theorem 1 in \cite{mittal2023local} and we include the short proof for completeness. Note that $\Delta_2=1$.

\begin{proof}[Proof of Proposition \ref{prop:fs2}]
We square the Hamiltonian and drop non-negative terms, i.e.\ the anti-commutators of non-overlapping terms, to find the operator inequality
\begin{equation}\label{eq:refabove1}
H_\Lambda^2\geq H_\Lambda+\sum_x \sum_{\substack{y,z\sim x\\ y\neq z}} \{P_{x,y},P_{y,z}\}.
\end{equation}
 However, we now define a different auxiliary subsystem operator by squaring the sum over all the incident edges at a vertex, i.e.,
\begin{equation}\label{eq:refabove2}
    \CA = \sum_{x} \left( \sum_{y\sim x} P_{x,y}\right)^2 = 2H_\Lambda +\sum_x \sum_{\substack{y,z\sim x:\\ y\neq z}}\{P_{x,y},P_{y,z}\}.
\end{equation}
 We now expand the operator $\mathcal{A}$ as a sum over all vertices of a fixed degree
\begin{equation}
    \mathcal{A} = \sum_{k'=1}^k \sum_{\substack{x: \deg(x)=k'}} \left( \sum_{y\sim x} P_{x,y}\right)^2\geq \sum_{k'=1}^k  \Delta_{k'}\sum_{\substack{x: \deg(x)=k'}} \sum_{y\sim x} P_{x,y}.
\end{equation}
We then find that
\begin{equation}
    \mathcal{A} \geq 2H_\Lambda \min_{1\leq k'\leq k}\Delta_{k'}.
\end{equation}
The claim now follows by combining this with \eqref{eq:refabove1} and \eqref{eq:refabove2}.
\end{proof}

\section{Proof of \autoref{thm:main1}}\label{sec:proof}
In this section, we work in the setting of \autoref{thm:main1} of random rank-$1$ interactions.
 
\subsection{Local gap bound}
Considering Proposition \ref{prop:fs}, our goal is to derive a lower bound on $\Delta_3$. This is the content of the following proposition, whose proof will occupy most of this section.

\begin{proposition}[Lower bound on $\Delta_3$]
\label{prop:Delta3}
 We have
 \begin{equation}
     \Delta_3 \geq 1-\frac{(m_d+\epsilon)^2}{d(1-\delta)}
\end{equation}
holds with probability at least 
\[
1-e^{-d^2\delta^2 /4}- e^{-d^2\epsilon^2}.
\]
\end{proposition}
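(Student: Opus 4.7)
My plan is to reduce $\Delta_3$ to an operator-norm estimate on the random matrix $C$, and then derive that estimate from two independent concentration bounds on the underlying GOE matrix $G$.

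First, I would apply the classical Jordan decomposition for pairs of orthogonal projections: since $H_3 = P_{1,2} + P_{2,3}$ is frustration-free and $\|P_{1,2}P_{2,3}\|<1$ almost surely, the smallest positive eigenvalue of $H_3$ is exactly $\Delta_3 = 1 - \|P_{1,2}P_{2,3}\|$. A direct basis-level calculation, using the symmetry $C = C^T$ to contract the shared middle site into a power of $C^2$, establishes the operator identity
\[
P_{1,2}P_{2,3}P_{1,2} = \ketbra{v}_{1,2}\otimes C^4_{\,3}\,.
\]
Taking operator norms, and noting $\|C\| = \|G\|/\sqrt{\tr(G^2)}$, yields $\|P_{1,2}P_{2,3}\| = \|C\|^2 = \|G\|^2/\tr(G^2)$. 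So the proposition reduces to showing that $\|G\|^{2}/\tr(G^{2})\leq (m_d+\epsilon)^{2}/(d(1-\delta))$ with probability at least $1-e^{-d^{2}\delta^{2}/4}-e^{-d^{2}\epsilon^{2}}$, and it suffices to bound $\|G\|^2$ above by $(m_d+\epsilon)^2$ and $\tr(G^2)$ below by $d(1-\delta)$, then take a union bound.

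For the denominator, $\tr(G^2)$ is a positive linear combination of $d(d+1)/2$ independent squared standard Gaussians, i.e.\ a scaled $\chi^2$ variable with mean of order $d$; Laurent--Massart-type chi-squared concentration then immediately gives $\Pr(\tr(G^2) < d(1-\delta)) \leq e^{-d^{2}\delta^{2}/4}$. For the numerator, the starting point is the Schatten-domination $\|G\|\leq (\tr(G^{2p}))^{1/(2p)}$, valid for every $p\geq 1$. Markov's inequality at the minimizer $p^{\ast}$ in \eqref{eq:mddefn} then gives
\[
\Pr(\|G\|>m_d+\epsilon) \leq \frac{\Ex[\tr(G^{2p^{\ast}})]}{(m_d+\epsilon)^{2p^{\ast}}} = \left(\frac{m_d}{m_d+\epsilon}\right)^{\!2p^{\ast}}\!.
\]

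The main obstacle is this last Markov bound: matching the target probability $e^{-d^{2}\epsilon^{2}}$ requires either that the optimizer $p^{\ast}$ is of order $d^{2}\epsilon\,m_d$ (so that $2p^{\ast}\log(1+\epsilon/m_d)\geq d^{2}\epsilon^{2}$), or a sharper concentration tool. A natural alternative is Gaussian--Lipschitz concentration applied to the Schatten norm $\|G\|_{2p}$, which is a convex $1$-Lipschitz function with respect to the Frobenius norm and hence $\sqrt{2/d}$-Lipschitz in the standard Gaussian coordinates parametrizing $G$, with expectation controlled by $m_d$; choosing $p$ carefully in concert with the target exponent would then yield the required deviation bound. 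Once both concentration events are in hand, the proof concludes by a union bound together with the reduction in the first paragraph.
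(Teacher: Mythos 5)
Your reduction of $\Delta_3$ to the quantity $\|G\|^2/\tr(G^2)$ is sound and essentially matches the paper. The identity $P_{1,2}P_{2,3}P_{1,2}=\ketbra{v}_{12}\otimes (C^4)_3$ checks out (the paper instead computes $P_{1,2}P_{2,3}U=\ketbra{v}_{12}\otimes C^2$ for a relabeling unitary $U$, and handles the intersection term via the Fannes--Nachtergaele--Werner inequality plus a separate lemma showing $\mathrm{ran}\,P_{1,2}\cap\mathrm{ran}\,P_{2,3}=\{0\}$ unless $v$ is a product state, a probability-zero event). Your two-projection spectral decomposition needs that same almost-sure triviality of the intersection; your formula does deliver it, since $\|G\|^2/\tr(G^2)<1$ unless $G$ has rank one. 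The $\chi^2$ lower tail for the denominator is likewise in the spirit of the paper, which runs the exponential Chernoff bound by hand and lands on the same $e^{-d^2\delta^2/4}$.

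The genuine gap is exactly the one you flagged: the upper tail for $\|G\|$. Markov's inequality at the minimizer $p^\ast$ of \eqref{eq:mddefn} cannot work, because by the proof of Proposition \ref{prop:md} that minimizer is of order $\log d$, so $(m_d/(m_d+\epsilon))^{2p^\ast}$ is only polynomially small in $d$ --- nowhere near $e^{-d^2\epsilon^2}$ --- and your fallback is stated as a plan rather than carried out. The paper's argument separates the two roles cleanly: the moments enter \emph{only} to bound the mean, via $\Ex\|G\|\leq\sqrt{\Ex\|G^2\|}\leq\sqrt{\Ex\|G^2\|_p}\leq(\Ex[\tr(G^{2p})])^{1/(2p)}$ for every $p\geq 1$ (Schatten monotonicity plus Jensen), hence $\Ex\|G\|\leq m_d$; the tail around the mean then comes entirely from Gaussian--Lipschitz concentration (L\'evy's lemma) applied to the operator norm itself, which is $1$-Lipschitz in the Frobenius norm, yielding $\Pr(\|G\|\geq m_d+\epsilon)\leq\Pr(\|G\|\geq\Ex\|G\|+\epsilon)\leq e^{-d^2\epsilon^2}$. (The precise constant in the exponent hinges on the Lipschitz constant in the Gaussian coordinates parametrizing $G$, so your worry about hitting the stated rate is not idle, but the mechanism is isoperimetry, not moments.) Your Schatten-norm fallback can be completed along the same lines --- fix $p=p^\ast$, use $\|G\|\leq\|G\|_{2p^\ast}$ and $\Ex\|G\|_{2p^\ast}\leq m_d$, then concentrate the Lipschitz function $\|G\|_{2p^\ast}$ --- but the cleaner route is to concentrate $\|G\|$ directly and never treat the Schatten norm as a random variable.
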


To lower bound $\Delta_3$, we begin by squaring $H_3$. Using the standard Fannes-Nachtergaele-Werner bound, Lemma 6.3 in \cite{fannes1992finitely}, we obtain
    \begin{equation}\label{eq:Delta3start}
            H_3^2=H_3+\{P_{1,2},P_{2,3}\}\geq (1-\|P_{1,2}P_{2,3}-P_{1,2}\wedge P_{2,3}\|)H_3,
   \end{equation}
    where $P_{1,2}\wedge P_{2,3}$ denotes the orthogonal projection onto $\mathrm{ran}\, P_{1,2}\cap \mathrm{ran}\, P_{2,3}$.

    Hence, Proposition \ref{prop:Delta3} will follow from the norm bound
\begin{equation}\label{eq:normbound1}
    \|P_{1,2}P_{2,3}-P_{1,2}\wedge P_{2,3}\|\leq \frac{(m_d+\epsilon)^2}{d(1-\delta)}.
\end{equation}

The proof of \eqref{eq:normbound1} uses two lemmas:
Lemma \ref{lemma:intersection} proves that $P_{1,2}\wedge P_{2,3}=0$ almost surely and  Lemma \ref{lemma:exactcalc} calculates $\|P_{1,2}P_{2,3}\|$ resulting in certain random matrix norms.
We recall that $P=\ketbra{v}$ where $v\in\mathbb C^d\otimes \mathbb C^d$ is given by
\[
v=\sum_{i,j=1}^dC_{ij}e_i\otimes e_j,\qquad \|v\|^2=\mathrm{Tr}( C^\dagger C)=1.
\]

\begin{lemma}\label{lemma:intersection}
$P_{1,2}\wedge P_{2,3}\neq \{0\}$ holds if and only if $ v=u\otimes u$ for some $u\in\mathbb C^d$.
\end{lemma}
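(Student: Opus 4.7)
The plan is to translate the condition $P_{1,2}\wedge P_{2,3}\neq \{0\}$ into a coordinate identity on the matrix $C$, and then exploit the symmetry $C=C^T$ (inherited from $G=G^T$) to force $v$ to be a tensor square. For the sufficiency direction ($\Leftarrow$), I would simply observe that if $v=u\otimes u$, then the triple product $u\otimes u\otimes u$ decomposes as both $v_{12}\otimes u_3$ and $u_1\otimes v_{23}$, placing it in $\mathrm{ran}\, P_{1,2}\cap \mathrm{ran}\, P_{2,3}$.

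For the necessity direction ($\Rightarrow$), since $P_{1,2}$ and $P_{2,3}$ are rank-one projections onto $v\otimes\mathbb C^d$ and $\mathbb C^d\otimes v$ respectively, any nonzero element $\psi$ of the intersection has the form
\[
\psi = v_{12}\otimes w_3 = u_1 \otimes v_{23}
\]
for some nonzero vectors $u,w\in\mathbb C^d$. Writing $v=\sum_{i,j}C_{ij}\, e_i\otimes e_j$ and comparing coefficients of $e_i\otimes e_j\otimes e_k$ on both sides yields the scalar identity $C_{ij}w_k = u_i C_{jk}$ for all $i,j,k$. I would then pick any index $k_0$ with $w_{k_0}\neq 0$ and set $a_j := C_{jk_0}/w_{k_0}$. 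Specializing $k=k_0$ in the identity gives $C_{ij}=u_i a_j$, so $C=ua^T$ has rank one.

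The key step is then to invoke $C^T=C$, which transforms the factorization into the equation $ua^T = au^T$, i.e., $u_i a_j = a_i u_j$ for all $i,j$. A short componentwise argument, handling separately the (trivial) indices where $u_i=0$, shows that $a=c\,u$ for some constant $c\in\mathbb C$. Hence $C= c\, uu^T$, and choosing any complex square root $\sqrt{c}$ and setting $\tilde u := \sqrt{c}\,u$ gives $v = \tilde u\otimes \tilde u$, as claimed.

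The argument is essentially elementary linear algebra, so I do not anticipate any substantial obstacle. The only mildly delicate point is that one must use the symmetry of $C$ somewhere: without it, the rank-one reduction only yields $v=\tilde u\otimes \tilde w$ with possibly distinct factors, which is genuinely weaker than $v=\tilde u\otimes \tilde u$ and would not suffice for the subsequent use of this lemma to conclude $P_{1,2}\wedge P_{2,3}=\{0\}$ almost surely under the GOE law.
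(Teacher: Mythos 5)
Your proof is correct and follows essentially the same route as the paper: write a nonzero intersection vector in the two product forms, compare coefficients to get the identity $C_{ij}w_k=u_iC_{jk}$ (the paper's $\lambda_k C_{ij}=\mu_i C_{jk}$), specialize one index to obtain the rank-one factorization $C=ua^T$, and invoke $C=C^T$ to force the two factors to be proportional. If anything, your handling of the proportionality constant via $\tilde u=\sqrt{c}\,u$ is slightly more explicit than the paper's ``hence $\ket{u_1}=\ket{u_2}$.''
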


\begin{proof}[Proof of Lemma \ref{lemma:intersection}]
We have
\[
\begin{aligned}
\mathrm{ran}\,    P_{1,2}=&
\mathrm{span}\left\{\ket{v}_{12}\otimes \ket{k}_3\,:\, 1\leq k\leq d\right\}
=\mathrm{span}\left\{\sum_{i,j=1}^d C_{ij}\ket{ijk}\,:\, 1\leq k\leq d\right\},\\
\mathrm{ran}\,    P_{2,3}=&
\mathrm{span}\left\{\ket{i}_1\otimes \ket{v}_{23}\,:\, 1\leq i\leq d\right\}
=\mathrm{span}\left\{\sum_{j,k=1}^d C_{jk}\ket{ijk}\,:\, 1\leq i\leq d\right\}.\\
\end{aligned}
\]
By definition, $P_{1,2}\wedge P_{2,3}\neq \{0\}$ is equivalent to $\mathrm{ran}\,P_{1,2}\cap \mathrm{ran}\,P_{2,3}\neq\{0\}$. This holds iff there exist not identically zero coefficient vectors $\lambda=(\lambda_1,\ldots,\lambda_d)$ and $\mu=(\mu_1,\ldots,\mu_d)$ in $\mathbb C^d$ such that
\[
\sum_{i,j,k=1}^d \lambda_k C_{ij}\ket{ijk}
=\sum_{i,j,k=1}^d \mu_i C_{jk}\ket{ijk}.
\]
By orthonormality, this implies
\[
\lambda_k  C_{ij}=\mu_i C_{jk},\qquad \forall i,j,k\in\{1,\ldots,d\}.
\]
Let $k_0\in \{1,\ldots,d\}$ be the minimal choice such that $\lambda_{k_0}\neq 0$. Then
\[
C_{ij}=\mu_i \frac{  C_{jk_0}}{\lambda_{k_0}} ,\qquad \forall i,j\in\{1,\ldots,d\}.
\]
Therefore,
\[
\begin{aligned}
\ket{v}=&\sum_{i,j=1}^d C_{ij} \ket{ij}
=\sum_{i,j=1}^d \mu_i \frac{  C_{jk_0}}{\lambda_{k_0}} \ket{ij}
=\ket{u_1}\otimes \ket{u_2},\\
\textnormal{with }
\ket{u_1}=&\sum_{i=1}^d \mu_i \ket {i},
\qquad \ket{u_2}=\sum_{j=1}^d \frac{  C_{jk_0}}{\lambda_{k_0}}\ket{j}.
\end{aligned}
\]
Since $C$ is symmetric, we also have
\[
\mu_i \frac{  C_{jk_0}}{\lambda_{k_0}} =\mu_j \frac{  C_{ik_0}}{\lambda_{k_0}},\qquad \forall i,j\in\{1,\ldots,d\}.
\]
This implies $\ket{v}=\ket{u_2}\otimes \ket{u_1}$ and hence $\ket{u_1}=\ket{u_2}$. To check that the reverse implication holds as well, we proceed through the argument in reverse order.
\end{proof}

\begin{lemma}\label{lemma:exactcalc}
    It holds that
    \[
\|P_{1,2}P_{2,3}\|=\frac{\|G\|^2}{\mathrm{Tr}(G^2)}
\]
\end{lemma}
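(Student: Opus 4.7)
My plan is to reduce the operator norm to a finite-dimensional spectral problem on the $d$-dimensional subspace $\mathrm{ran}(P_{2,3})$, where the computation becomes a direct algebraic manipulation involving the symmetric matrix $C$.

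First, I would use the standard identity $\|A\|^2=\|A^*A\|$ to write
\[
\|P_{1,2}P_{2,3}\|^2 = \|(P_{1,2}P_{2,3})^*(P_{1,2}P_{2,3})\| = \|P_{2,3}P_{1,2}^2P_{2,3}\| = \|P_{2,3}P_{1,2}P_{2,3}\|,
\]
using that $P_{1,2}$ is a projection. The self-adjoint operator $P_{2,3}P_{1,2}P_{2,3}$ vanishes on $\mathrm{ran}(P_{2,3})^\perp$, so its norm equals the norm of its restriction to $\mathrm{ran}(P_{2,3})$. An orthonormal basis of $\mathrm{ran}(P_{2,3})$ is given by the vectors $V e_i := |i\rangle_1\otimes|v\rangle_{23}$ for $i=1,\dots,d$, since $\langle v|v\rangle=1$.

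Next, I would compute the matrix elements $M_{ji} = \langle V e_j, P_{2,3}P_{1,2}P_{2,3} V e_i\rangle = \langle V e_j, P_{1,2}V e_i\rangle$, where the last equality uses $P_{2,3}V e_j = V e_j$ and the self-adjointness of $P_{2,3}$. Expanding $|v\rangle_{12}=\sum_{p,q}C_{pq}|p\rangle_1|q\rangle_2$ and $|v\rangle_{23}=\sum_{j,k}C_{jk}|j\rangle_2|k\rangle_3$, a short calculation gives
\[
P_{1,2}(|i\rangle_1\otimes|v\rangle_{23}) = |v\rangle_{12}\otimes\sum_k (C^2)_{ik}|k\rangle_3
\]
(collecting $\sum_j C_{jk}C_{ij}=(C^2)_{ik}$, using that $C$ is real symmetric). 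Contracting with $\langle V e_j|$ produces a second factor of $C^2$, and again using $C^T=C$ one obtains
\[
M_{ji} = \sum_{q,k} C_{jq}\,(C^2)_{ik}\, C_{qk} = (C^4)_{ji}.
\]
Therefore, the restriction of $P_{2,3}P_{1,2}P_{2,3}$ to $\mathrm{ran}(P_{2,3})$ has matrix $C^4$ in the basis $\{V e_i\}$.

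Finally, since $C$ is real symmetric, $C^4$ is positive semidefinite with $\|C^4\|=\|C\|^4$. Combining yields $\|P_{1,2}P_{2,3}\|^2 = \|C\|^4$, hence $\|P_{1,2}P_{2,3}\| = \|C\|^2 = \|G\|^2/\mathrm{Tr}(G^2)$, where in the last step we used the definition of $C$ and $\mathrm{Tr}(G^\dagger G)=\mathrm{Tr}(G^2)$. The only delicate part is keeping track of the three tensor factors and applying the symmetry $C^T=C$ consistently; everything else is a one-line manipulation.
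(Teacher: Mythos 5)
Your proof is correct. It differs from the paper's in how the norm is ultimately evaluated: the paper computes the product $P_{1,2}P_{2,3}$ directly in index notation and then composes with a permutation unitary $U$ (cyclically relabelling the tensor legs) to identify $P_{1,2}P_{2,3}U=\ket{v}\!\bra{v}_{12}\otimes C^2$, from which $\|P_{1,2}P_{2,3}\|=\|v\|\,\|C^2\|=\|C\|^2$ is immediate. You instead pass to $\|A\|^2=\|A^*A\|$, restrict the self-adjoint operator $P_{2,3}P_{1,2}P_{2,3}$ to the $d$-dimensional subspace $\mathrm{ran}(P_{2,3})$ with the orthonormal basis $\ket{i}_1\otimes\ket{v}_{23}$, and identify its matrix as $C^4$. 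The two routes exploit the same underlying algebraic fact -- contracting $v$ against $v$ over a shared tensor factor yields $C^2$ -- but your version avoids having to guess the relabelling unitary and reduces everything to a Gram-matrix computation, at the cost of one extra squaring; the paper's version is slightly slicker and exhibits the operator $P_{1,2}P_{2,3}$ itself in a transparent form (which is occasionally useful beyond the norm). All the steps you flag as delicate (orthonormality of the basis $\{\ket{i}_1\otimes\ket{v}_{23}\}$, reality and symmetry of $C$, $\|C^4\|=\|C\|^4$) check out, and the final passage from $\|C\|^2$ to $\|G\|^2/\mathrm{Tr}(G^2)$ is exactly as in the paper.
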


\begin{proof}
Recall that 
\[
\begin{aligned}
  P_{1,2}=&
\ket{v}\bra{v}_{12}\otimes \iden_3
=
\sum_{i,j,k,l,m=1}^d C_{ij} C_{kl}\ket{ijm}\bra{klm}\\
  P_{2,3}=&
\iden_1\otimes \ket{v}\bra{v}_{23} 
=\sum_{n,p,q,r,s=1}^d C_{pq} C_{rs}\ket{npq}\bra{nrs}.
\end{aligned}
\]
Therefore
\[
\begin{aligned}
P_{1,2}P_{2,3}
=&\sum_{i,j,k,l,m=1}^d  \sum_{n,p,q,r,s=1}^d
C_{ij} C_{kl} C_{pq} C_{rs}\delta_{k,n}\delta_{l,p}\delta_{m,q} \ket{ijm}\bra{nrs}\\
=&\sum_{i,j,k,m=1}^d \sum_{r,s=1}^d
C_{ij} (C^2)_{km}  C_{rs} \ket{ijm} \bra{krs}.
\end{aligned}
\]
We use unitary invariance of the operator norm to suitably relabel the matrix columns. Namely, we define the  operator
\[
U\ket{krs}=\ket{rsk}
\]
and extend it to a unitary $U$ on $\mathbb C^d\otimes \mathbb C^d\otimes \mathbb C^d$ by linearity. Then

\[
P_{1,2}P_{2,3}U
=\sum_{i,j,k,l,m=1}^d \sum_{r,s=1}^d
C_{ij} (C^2)_{km}  C_{rs} \ket{ijm} \bra{rsk}
=\ket{v}\bra{v}_{12}\otimes C^2
\]
and so
\[
\|P_{1,2}P_{2,3}\|
=\|P_{1,2}P_{2,3} U\|=\|v\|_2 \|C\|^2=\|C\|^2.
\]
The claim now follows from Definition \eqref{eq:Cdefn} of $C$ in terms of $G$.
\end{proof}


\begin{proof}[Proof of Proposition \ref{prop:Delta3}]
    We begin with \eqref{eq:Delta3start}. Hence, it remains to bound $ \|P_{1,2}P_{2,3}-P_{1,2}\wedge P_{2,3}\|$. We restrict to the event that $P_{1,2}\wedge P_{2,3}=0$, which can be done without loss of generality because this event has probability $1$ by Lemma \ref{lemma:intersection}.  

    On this event, by Lemma \ref{lemma:exactcalc} 
    \[
    \|P_{1,2}P_{2,3}-P_{1,2}\wedge P_{2,3}\|=\|P_{1,2}P_{2,3}\|\leq \frac{\|G\|^2}{\mathrm{Tr}(G^2)}.
    \]
We now bound the numerator and the denominator separately by probabilistic tail bounds. 

We start with the tail bound for the denominator, which is simpler. The mean is 
\[
\mathbb E\left[ \mathrm{Tr}(G^2)\right]=\sum_{i,j}\mathbb E[G_{ij}^2]=d.
\]
Since $\frac{1}{d}\mathrm{Tr}(G^2)=\tfrac{1}{d}\sum_{i,j=1}^d G_{ij}^2$ is a sum of i.i.d.\ random variables we can obtain tail bounds from the exponential Markov inequality. Let $\delta>0$. Then, with $t>-\frac{d^2}{2}$ a free parameter,
\[
\begin{aligned}
&\mathbb P\left(\tfrac{1}{d}\mathrm{Tr}(G^2)<1-\delta\right)\\
=&\mathbb P\left(-\tfrac{1}{d}\mathrm{Tr}(G^2)>-(1-\delta)\right)\\
\leq& \exp\left(\log\mathbb E\left[\exp(-\tfrac{t}{d}\mathrm{Tr}(G^2))\right]+(1-\delta)t\right)\\
=& \exp\left(-\tfrac{d^2}{2}\log \left(1+2\tfrac{t}{d^2}\right)+(1-\delta)t\right)
\end{aligned}
\]
In the last step we used that for a sum $S_n=\sum_{i=1}^n Z_i^2$ of independent standard Gaussians $Z_1,\ldots,Z_n$ the moment generating function is $\mathbb E[e^{\tau S_n}]=(1-2\tau)^{-n/2}$ for $\tau<\tfrac{1}{2}$. The optimal choice of $t$ is $t_\delta=\frac{\delta}{1-\delta} \frac{d^2}{2}$ which gives the tail bound
\[
\mathbb P\left(\tfrac{1}{d}\mathrm{Tr}(G^2)<1-\delta\right)\leq \exp\left(\frac{d^2}{2} \left(\log(1-\delta)+\delta\right)\right)
 \leq e^{-d^2\delta^2/4}.
\]

We come to the tail bound for the numerator. For this we use L\'evy's lemma (or earlier Gaussian concentration bounds) and the fact that the operator norm is 1-Lipschitz. For this, we recall that the standard deviation of each entry of $G$ is $\frac{1}{d}$.
Let $\epsilon>0$. Then, 
\begin{equation}\label{eq:concentration}
\begin{aligned}
        &\Pr\Big( \| G\| \geq m_d+\epsilon\Big)
    \leq \Pr\left( \| G\| \geq \sqrt{\Ex\big[\| G^2\|\big]}+\epsilon\right)
    \leq \Pr\Big( \| G\| \geq \Ex\big[\| G\|\big] + \epsilon\Big) \leq e^{-d^2\epsilon^2}.
\end{aligned}
\end{equation}
By the monotonicity of Schatten $p$-norms and Jensen's inequality, we have
for any $p\geq 1$
\[
    \Ex\big[\| G^2\|]\leq \Ex\big[\| G^2\|_p]\leq \left(\Ex\big[\tr(G^{2p})]\right)^{1/p}
\]
Minimizing over $p$ yields
\[
 \Ex\big[\| G^2\|]\leq m_d^2
\]
and so
\begin{equation}
    \Pr\Big( \| G\|^2 \leq (m_d+\epsilon)^2 \Big)\geq 1- e^{-d^2\epsilon^2}\,.
\end{equation}

We can now combine the tail estimates via a union bound. We obtain that
\begin{equation}
     \frac{\|G^2\|}{\tr(G^2)} \leq \frac{(m_d+\epsilon)^2}{d(1-\delta)}
\end{equation}
holds with probability at least 
\[
1-e^{-d^2\delta^2 /4}- e^{-d^2\epsilon^2}.
\]
This proves Proposition \ref{prop:Delta3}.
\end{proof}

\subsection{Conclusion}
\begin{proof}[Proof of Theorem \ref{thm:main1}]
By Proposition \ref{prop:Delta3}, we have
\[
\Delta_3>1-\alpha,\qquad \textnormal{for }\alpha=\frac{(m_d+\epsilon)^2}{d(1-\delta)}
\]
 with probability at least 
\[
1-e^{-d^2\delta^2 /4}- e^{-d^2\epsilon^2}.
\]
By Proposition \ref{prop:fs}, we obtain 
\[
\Delta_\Lambda\geq 2(k-1) \left(\frac{1}{2k-2}-\alpha\right).
\]
Hence, we obtain a positive lower bound on $\Delta_\Lambda$ for all $k\geq 2$ and parameters $d,\delta,\epsilon$ satisfying the condition $2k-2<\alpha^{-1}$. Finally, we set $\epsilon=\delta/2$ and obtain Theorem \ref{thm:main1}.
\end{proof}

\subsection{Proof of the bound on $m_d$ (Proposition \ref{prop:md})}
\label{ssec:propmdproof}

We aim to prove that
    \begin{equation}\label{eq:largedestimate}
m_d^2=\min_{p\geq 1} \Ex\big[{\tr(G^{2p}})\big]^{1/(2p)}\leq 
    4e^2\log(d).
\end{equation}
We shall derive this from 
\begin{equation}\label{eq:Wick}
    \frac{1}{d} \Ex\big[{\tr(G^{2p}})\big]\leq \frac{(2p)!}{2^p p!},\qquad \forall p\in \mathbb Z_+.
\end{equation}
Indeed, assuming that \eqref{eq:Wick} holds, we choose $p_*=\lceil\log d\rceil$ and use Stirling's inequality  $(n/e)^n\leq n! \leq n^n$ to obtain
\begin{equation}
    m_d^2\leq \Ex\big[\tr(G^{2p_*})]^{1/p_*}\leq \left(\frac{(2p_*)!}{2^{p_*} p_*!}\right)^{1/p_*} \leq 2ep_* d^{1/p_*} \leq 2e^2 \lceil\log d\rceil. 
\end{equation}

It remains to prove \eqref{eq:Wick}, for which we use Wick's theorem. Let $p\in \mathbb Z_+$. We compute the $p$th moment by summing over Wick contractions and using $\Ex[G_{ij}G_{k\ell}] = \frac{1}{d}\delta_{ik}\delta_{j\ell}$ for the matrix elements. Let $M_{2p}$ be the subset of $S_{2p}$ consisting of pair partitions. We have
\begin{equation}
    \frac{1}{d} \Ex\big[{\tr(G^{2p}})\big] = \frac{1}{d} \Ex\big[{\tr(G^{\otimes 2p}}\Pi_{\rm cyc})\big] = \frac{1}{d}\sum_{\sigma\in M_{2p}} \frac{1}{d^p}\tr (\Pi_{\sigma}\Pi_{\rm cyc}) =  \frac{1}{d^{p+1}}\sum_{\sigma\in M_{2p}} d^{\ell(\sigma^{-1}{\rm cyc})}
\end{equation}
where $\Pi_{\rm cyc}$ is the cyclic permutation and $\ell(\sigma)$ is the length of the cycle type of the permutation $\sigma\in S_{2p}$. The number of cycles in a permutation product is related to the distance between two permutations $\sigma,\tau\in S_{2p}$ as $\ell(\sigma^{-1}\tau) = 2p-d(\sigma,\tau)$, where $d(\sigma,\tau)$ is the minimal number of transpositions required to take $\sigma$ to $\tau$. It follows from the triangle inequality that $\ell(\sigma^{-1}\tau) + \ell(\sigma) \leq 2p + \ell(\tau)$, and thus $\ell(\sigma^{-1}{\rm cyc}) \leq 2p + \ell({\rm cyc}) -\ell(\sigma) \leq p+1$ for $\sigma\in M_{2p}$, and so
\begin{equation}
    \frac{1}{d} \Ex\big[{\tr(G^{2p}})\big] = \frac{1}{d^{p+1}}\sum_{\sigma\in M_{2p}} d^{\ell(\sigma^{-1}{\rm cyc})}\leq \frac{(2p)!}{2^p p!},
\end{equation}
where the last inequality holds because  $\frac{(2p)!}{2^p p!}$ is the number of pair partitions of a set of length $2p$. This proves \eqref{eq:Wick} and hence Proposition \ref{prop:md}.
\qed

\begin{remark}
    There is an asymptotic expression for the moments of traces of $G$ \cite{brezin2000characteristic}
\begin{equation}
    \frac{1}{d} \Ex\big[{\tr(G^{2p}})\big] = (2p!) \sum_{\ell=0}^p \bigg(\prod_{j=0}^\ell\left(1-\frac{j}{d}\right)\bigg) \frac{1}{(2d)^{p-\ell}}\frac{1}{\ell!(\ell+1)!(p-\ell)!} = C_p + O(d^{-2})
\end{equation}
where the leading order term goes as the $p$-th Catalan number. With an exact control on the error term, this expression can be used to improve the  upper bound above for large $d$, but we prefer to give the  self-contained proof above counting the number of Wick contractions.
\end{remark}

\section{Deterministic Hamiltonians built from Haar projectors }\label{sec:2proof}
In this section, we consider the translation-invariant Hamiltonians \eqref{eq:HLambda2} whose nearest-neighbor interaction is defined in terms of Haar projectors,
\[
    P=Q^\perp,\qquad Q = \int_{\mathrm{U}(q^2)}  U^{\otimes t}\otimes \overline{U}^{\otimes t} \mathrm{d}\mu_H(U).
\]
We first recall the short proof of Lemma \ref{lm:2ff} on frustration-freeness from \cite{brown2010convergence}. Afterwards, we analytically verify the star-graph finite-size criterion Proposition \ref{prop:fs2} to obtain Theorem \ref{thm:main2}. 

\subsection{Proof of frustration-freeness}
\begin{proof}[Proof of Lemma \ref{lm:2ff}]

Note that $H_\Lambda\geq 0$. We explicitly write down zero-energy ground states, thereby proving frustration-freeness. Let $\{\ket{i}\}_{1\leq i\leq q}$ be an orthonormal basis of $\C^q$ and let $\{\ket{\vec{\imath}\,}\}$ be an orthonormal basis of $({\C^q})^{\otimes t}$, where $\vec\imath =(i_1,\ldots,i_t)\in \{1,\ldots,q\}^{t}$. For a permutation $\sigma\in S_t$, define the following vectors
\begin{equation}
    \ket{\psi_\sigma} = (\Pi_\sigma \otimes \iden) \ket\Omega\,, \where \ket\Omega = \frac{1}{q^{t/2}} \sum_{i_1,\ldots,i_t=1}^q \ket{\vec\imath\,}\otimes\ket{\vec\imath\,}\,,
\end{equation}
so $\ket\Omega\in ({\C^q})^{\otimes t}\otimes ({\C^q})^{\otimes t}$ is the normalized maximally entangled state between the $2t$ copies. The permutation operator $\Pi_\sigma$ is defined by its action on basis states as
\begin{equation}
    \Pi_\sigma \ket{i_1}\otimes\ldots\otimes \ket{i_t} = \ket{i_{\sigma(1)}}\otimes\ldots\otimes\ket{i_{\sigma(t)}}\,.
\end{equation}
Equivalently, $\ket{\psi_\sigma} $ is the vectorization of the standard representation of the permutation $\sigma\in S_t$. 

Since $\iden\otimes A^* \ket{\Omega} = A^\dagger \otimes \iden \ket{\Omega} $ for any operator $A$, we have $U^{\otimes t}\otimes U^*{}^{\otimes t} \ket{\psi_\sigma} = \ket{\psi_\sigma}$ for any unitary $U$ and any $\sigma\in S_t$. Therefore for all local terms $P_{x,y}$ we have $P_{x,y} \ket{\psi_\sigma}^{\otimes n} = 0$ for any $\sigma\in S_t$. 

This proves that for any graph $\Lambda$, the Hamiltonian is frustration free with zero-energy  ground states $\{\ket{\psi_\sigma}^{\otimes n}\}$. 
\end{proof}

We remark that the  states $\{\ket{\psi_\sigma}^{\otimes n}\}$ are not orthogonal, but they are almost orthogonal in the sense that overlap of two differing states is suppressed by dimension factors \cite{brandao2016local,Harrow23}. Nevertheless, the ground space of $H_\Lambda$ is spanned by the permutation states $\ker H_\Lambda = {\rm span}\{ \ket{\psi_\sigma}^{\otimes n}\}$ for all values of $n$ and $t$. 

\subsection{Lower bound on star graphs gaps}

In this subsection, we derive the requisite lower bound on the star graph gaps.
 \begin{proposition}\label{prop:star}
Let $n,q,t\geq 1$. For the model defined by \eqref{eq:HLambda2} and \eqref{eq:Qdefn} with $q>t$, the spectral gap on the $(n+1)$-site star graph is lower-bounded by
\begin{equation}
    \Delta_{n+1}\geq 1 - \frac{2nt^2}{q}.
\end{equation}
\end{proposition}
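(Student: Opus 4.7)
The plan is to reduce the claimed gap bound to an operator inequality and then establish it by comparing $\sum_{i=1}^n Q_{0,i}$ with an idealized diagonal operator on a Gram-orthonormalized basis of permutation states. Let $\Pi$ denote the projector onto $\ker H_{n+1}$. By Lemma \ref{lm:2ff} and the linear independence of $\{|\psi_\sigma\rangle^{\otimes(n+1)}\}$ for $q>t$, $\mathrm{ran}(\Pi)$ equals their $t!$-dimensional span. Since each $Q_{0,i}$ fixes every ground state (so $Q_{0,i}\Pi=\Pi$), writing $H_{n+1}=nI-\sum_i Q_{0,i}$ shows that $\Delta_{n+1}\ge 1-2nt^2/q$ is equivalent to
\[
(I-\Pi)\Big(\sum_{i=1}^n Q_{0,i}\Big)(I-\Pi)\le \Big(n-1+\frac{2nt^2}{q}\Big)(I-\Pi).
\]

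The central technical input is the Gram matrix $G_{\sigma\tau}=\langle\psi_\sigma|\psi_\tau\rangle=q^{\ell(\sigma^{-1}\tau)-t}$, where $\ell(\pi)$ counts cycles of $\pi\in S_t$. Using $\ell(\pi)\le t-1$ for $\pi\neq e$ together with the generating-function identity $\sum_{\pi\in S_t}q^{\ell(\pi)}=\prod_{j=0}^{t-1}(q+j)$, I would prove $\|G-I\|_\infty\le t^2/q$ for $q>t$, and hence $\|G^{-1/2}-I\|_\infty=O(t^2/q)$ by a Neumann series. It is then natural to pass to the orthonormal single-site basis $|f_\sigma\rangle=\sum_\tau(G^{-1/2})_{\sigma\tau}|\psi_\tau\rangle$ of the permutation subspace $E\subseteq\mathbb C^{q^{2t}}$. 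A short computation starting from the closed form $Q_{0,i}=\sum_{\sigma,\tau}((G\circ G)^{-1})_{\sigma\tau}|\psi_\sigma\rangle_0|\psi_\sigma\rangle_i\langle\psi_\tau|_0\langle\psi_\tau|_i\otimes I_{\mathrm{rest}}$ then yields $\|Q_{0,i}-Q^{\mathrm{diag}}_{0,i}\|=O(t^2/q)$, where $Q^{\mathrm{diag}}_{0,i}:=\sum_\sigma|f_\sigma\rangle_0|f_\sigma\rangle_i\langle f_\sigma|_0\langle f_\sigma|_i\otimes I_{\mathrm{rest}}$ is the idealized diagonal projector.

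The operator $\sum_i Q^{\mathrm{diag}}_{0,i}$ is diagonal in the orthonormal tensor basis $\{|f_{\vec\sigma}\rangle:\vec\sigma\in S_t^{n+1}\}$ of $E^{\otimes(n+1)}$, with eigenvalue $\#\{i:\sigma_i=\sigma_0\}$; this is at most $n-1$ on the complement of its top eigenspace $\mathrm{span}\{|f_\sigma\rangle^{\otimes(n+1)}:\sigma\in S_t\}$, while outside $E^{\otimes(n+1)}$ the $Q^{\mathrm{diag}}_{0,i}$ have strictly smaller support and can only contribute less. Combining the ideal bound with the per-edge correction $O(t^2/q)$ summed over $n$ edges, plus the $O(nt^2/q)$ rotation that identifies the ideal ground-space projector with the actual $\Pi$, produces via a triangle inequality the claimed $2nt^2/q$ correction. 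The main obstacle will be the last step's bookkeeping: one must verify that the perturbative corrections accumulate only linearly in $n$, which is controlled by the fact that all residuals $Q_{0,i}-Q^{\mathrm{diag}}_{0,i}$ share the common hub site $0$ and are uniformly bounded by the single $O(t^2/q)$ Gram estimate.
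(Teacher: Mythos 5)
Your proposal takes a genuinely different route from the paper. The paper invokes the moment-operator identity $\Delta_{n+1}=(n-1)\bigl(1-\bigl\|\tfrac{1}{n-1}\sum_i Q_{i,n}\otimes\iden-Q_{[n]}\bigr\|_\infty\bigr)$ from \cite{brandao2016local}, replaces the two ground-space projectors by the (non-orthogonal) sums $\sum_\sigma\psi_\sigma^{\otimes\cdot}$ at a cost of $t^2/q^2+t^2/q^n$ via \eqref{eq:17b}, and then evaluates the resulting norm \emph{exactly} through Lemma \ref{lm:psi}, which returns $m-1$ with no further error. You instead orthonormalize the single-site permutation states via $G^{-1/2}$, replace each $Q_{0,i}$ by an idealized diagonal projector in the $\{f_\sigma\}$ basis, and diagonalize the idealized star Hamiltonian explicitly. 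Both reductions rest on the same approximate orthogonality of permutation operators at large $q$, and your diagonalization of the ideal model plays the role of Lemma \ref{lm:psi}; what the paper's route buys is that only two error terms ever appear, whereas yours accumulates three sources (the $n$ per-edge replacements, the Gram orthonormalization, and the rotation between $\mathrm{span}\{\psi_\sigma^{\otimes(n+1)}\}$ and $\mathrm{span}\{f_\sigma^{\otimes(n+1)}\}$).

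Two concrete gaps remain. First, the input $\|G-I\|_\infty\le t^2/q$ for all $q>t$ does not follow from the argument you describe: the row sum you are bounding equals $\prod_{j=1}^{t-1}(1+j/q)-1$, which already for $t=10$, $q=11$ is about $24.9$, far exceeding $t^2/q\approx 9.1$. The estimate is valid only for $q\gtrsim t^2$ — admittedly the only regime in which $1-2nt^2/q$ is positive, but as stated your lemma is false and its range must be restricted (this caveat also silently afflicts the hypothesis $q>t$ in the proposition itself). Second, and more seriously, the explicit constant $2$ in $2nt^2/q$ is precisely the content of the proposition, and you defer exactly the bookkeeping that would produce it: all three error sources are left as $O(t^2/q)$. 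The ground-space rotation in particular lives on all $n+1$ sites, and a crude bound on the distance between $\mathrm{span}\{\psi_\sigma^{\otimes(n+1)}\}$ and $\mathrm{span}\{f_\sigma^{\otimes(n+1)}\}$ would not obviously stay at a single power of $q^{-1}$ uniformly in $n$; the paper sidesteps this by using that the $(n+1)$-fold overlaps are suppressed as $q^{-(n+1)}$ (the $t^2/q^n$ term in \eqref{eq:17b}), and you would need the analogous refinement. Until those constants are pinned down, the proposal is a plausible programme rather than a complete proof.
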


The proof idea is familiar from random quantum circuits on $n$ qudits of local dimension $q$ on a general graph $G$, such that the vertices of $G$ correspond to the qudits and the edges are the allowed 2-local interactions of the  architecture. Let the distribution $\nu_n^G$ be a single step of the evolution, i.e., a single layer of the circuit, defined as follows: pick an edge $(i,j) \in E(G)$ uniformly at random and apply a gate $U_{i,j}$ drawn randomly from the Haar measure on $U(q^2)$. Using the familiar large $q$ tricks, see e.g. \cite{haferkamp2021improved}, we can prove bounds on the star graph Hamiltonian gaps.

In the following we will repeatedly use the approximate orthogonality of permutation operators at large dimension. Specifically, as shown in \cite[Lemma 17(b)]{brandao2016local},
\begin{equation}\label{eq:17b}
    \big\| \sum_{\sigma\in S_t} \psi_\sigma^{\otimes n} - Q_{[n]}\big\|_\infty \leq \frac{t^2}{q^n}\,,
\end{equation}
where $Q_{[n]}$ is the ground state projector of $H_{\{1,\ldots,n\}}$. We will also use the operator $B$ which interchanges the permutation operators and an orthonormal basis for the kernel of the Hamiltonian
\begin{equation}
    B = \sum_{\sigma\in S_t} \ket{\psi_\sigma}\!\bra{\sigma}\,.
\end{equation}

We also note the following useful lemma.
\begin{lemma}\label{lm:psi} Let $\psi$ be a rank-one projector acting on $\C^D$ with $D\geq 2$. For any integer $m\geq 2$, we have
\begin{equation}
    \Big\|\sum_{i=1}^m (\psi)_i\otimes \iden_{[m]\setminus i} - m \psi^{\otimes m} \Big\|_\infty = m-1\,.
\end{equation}
\end{lemma}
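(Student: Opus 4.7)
The plan is to simultaneously block-diagonalize every term in the operator of interest using the canonical orthogonal decomposition of $(\C^D)^{\otimes m}$ induced by a rank-one projector. Writing $\psi=\ketbra{v}$ for a unit vector $v\in\C^D$ and denoting by $v^\perp$ the $(D-1)$-dimensional orthogonal complement of $\C v$ inside $\C^D$, each tensor factor splits as $\C v\oplus v^\perp$. For every subset $S\subseteq [m]$, I would define
\[
\mathcal H_S=\bigotimes_{i=1}^m K_i^{(S)},\qquad K_i^{(S)}=\begin{cases}\C v, & i\in S,\\ v^\perp, & i\notin S.\end{cases}
\]
These $2^m$ subspaces are pairwise orthogonal with direct sum $(\C^D)^{\otimes m}$, and each operator appearing in the statement preserves this decomposition.

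The second step is to evaluate each piece blockwise. Each factor $(\psi)_i\otimes\iden_{[m]\setminus i}$ acts as the identity on $\mathcal H_S$ when $i\in S$ and as zero when $i\notin S$, so $\sum_{i=1}^m(\psi)_i\otimes\iden_{[m]\setminus i}$ restricts to the scalar $|S|$ on $\mathcal H_S$. The operator $\psi^{\otimes m}$ is the orthogonal projector onto $\mathcal H_{[m]}=\C v^{\otimes m}$. Hence the operator
\[
T:=\sum_{i=1}^m(\psi)_i\otimes\iden_{[m]\setminus i}-m\,\psi^{\otimes m}
\]
acts as multiplication by $|S|$ on $\mathcal H_S$ for every $S$ with $|S|<m$, and as $m-m=0$ on $\mathcal H_{[m]}$. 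Its spectrum is therefore $\{0,1,2,\ldots,m-1\}$, so the operator norm is the largest eigenvalue, namely $m-1$.

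I do not foresee a serious obstacle; once the decomposition above is identified, each step is a one-line verification. The only point worth highlighting is that the hypothesis $D\geq 2$ enters exactly to guarantee $v^\perp\neq\{0\}$, which ensures that the subspaces $\mathcal H_S$ with $|S|=m-1$ are nonzero. This is what turns the trivial upper bound $\|T\|\leq m-1$ into the claimed equality by producing an actual eigenvector with eigenvalue $m-1$.
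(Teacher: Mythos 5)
Your proof is correct and is essentially the same argument as the paper's: your subspaces $\mathcal H_S$ are exactly the ranges of the mutually orthogonal projectors $(\psi^{\otimes\ell})_s\otimes((\psi^\perp)^{\otimes(m-\ell)})_{[m]\setminus s}$ appearing in the paper's expansion of $\iden=\psi+\psi^\perp$, and both proofs read off the spectrum $\{0,1,\ldots,m-1\}$ from this block diagonalization. Your explicit remark on where $D\geq 2$ is needed to attain the eigenvalue $m-1$ is a nice touch that the paper leaves implicit.
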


\begin{proof}
First note that in the case $m=2$, we have
\begin{equation}
    \big\|\psi\otimes \iden + \iden\otimes\psi - 2\psi\otimes \psi\big\|_\infty = \big\|\psi\otimes \psi^\perp - \psi^\perp\otimes \psi\big\|_\infty = 1\,.
\end{equation}
In general, writing $\iden = \psi + \psi^\perp$ and expanding out, we find
\begin{align}
    \sum_{i=1}^m (\psi)_i\otimes \iden_{[m]\setminus i} - m \psi^{\otimes m} &= \sum_{\ell=1}^{m-1} \left( \ell\, \sum_{s\in \binom{[m]}{\ell}} \big(\psi^{\otimes\ell}\big)_s \otimes \big((\psi^\perp)^{\otimes (m-\ell)}\big)_{[m]\setminus s}\right),
\end{align}
where we use the notation that $\binom{[m]}{\ell}$ represents the set of all size-$\ell$ subsets of $[m]=\{1,\ldots,m\}$ and  $(\psi^{\otimes\ell})_s$ with $s\in \binom{[m]}{\ell}$ acts on the tensor factors in $s$, i.e.\ $(\psi^{\otimes 3})_{\{1,2,4\}} = (\psi)_1\otimes(\psi)_2\otimes(\psi)_4$.

We now observe that every term in the sum is a projector and that they are all mutually orthogonal. Therefore the operator $\sum_{i=1}^m (\psi)_i\otimes \iden_{[m]\setminus i} - m \psi^{\otimes m}$ is diagonal as written above, with eigenvalues $0,1,\ldots,m-1$. Thus, the operator norm is equal to $m-1$.
\end{proof}

\begin{proof}[Proof of Proposition \ref{prop:star}]
Using that the Hamiltonian gap can be rewritten as the norm difference of suitable moment operators as in \cite[Lemma 16]{brandao2016local}, we obtain
\begin{equation}
    \Delta_{n+1} = (n-1)\left(1- \Big\| \frac{1}{n-1}\sum_{i=1}^{n-1} Q_{i,n}\otimes \iden_{[n-1]\setminus \{i\}} - Q_{[n]}\Big\|_\infty\right)
\end{equation}
We bound the norm on the right-hand side  by using the approximate orthogonality of permutation operators at large $q$ shown in \eqref{eq:17b} and Lemma \ref{lm:psi},
\begin{align}
     &\Big\| \frac{1}{n-1}\sum_{i=1}^{n-1} Q_{i,n}\otimes \iden_{[n-1]\setminus \{i\}} - Q_{[n]}\Big\|_\infty\\
    &\leq \frac{1}{n-1}\Big\| \sum_{i=1}^{n-1}\sum_\sigma (\psi_\sigma)_i\otimes \iden_{[n-1]\setminus i}\otimes(\psi_\sigma)_n - (n-1)\sum_\sigma \psi_\sigma^{\otimes n} \Big\|_\infty + \frac{t^2}{q^2} + \frac{t^2}{q^n}\\
    &\leq \frac{1}{n-1}\big\|B^\dagger B\big\|_\infty \max_\sigma \Big\| \sum_{i=1}^{n-1} (\psi_\sigma)_i\otimes \iden_{[n-1]\setminus i} - (n-1) \psi_\sigma^{\otimes n-1} \Big\|_\infty + \frac{t^2}{q^2} + \frac{t^2}{q^n}\\
    &\leq \frac{n-2}{n-1}\left(1+\frac{t^2}{q}\right) + \frac{t^2}{q^2} + \frac{t^2}{q^n}\\
    &\leq \frac{n-2}{n-1} + \frac{2t^2}{q}.
\end{align}
Therefore, the star graph gap satisfies $\Delta_{n+1}\geq 1-\frac{2nt^2}{q}$, which proves Proposition \ref{prop:star}.
\end{proof}

\subsection{Conclusion}
\begin{proof}[Proof of Theorem \ref{thm:main2}]
Let $n\geq 2$. From Proposition \ref{prop:star}, we have
\[
\Delta_{n}\geq 1-\frac{2(n-1)t^2}{q}.
\]
Recall also that $\Delta_{2}=1$, trivially. 

Let $\Lambda$ be a graph of maximal degree $ k$.
By Proposition \ref{prop:fs2}, we conclude
\[
 \Delta_\Lambda \geq 2\bigg( \min_{2\leq m\leq k} \Delta_{m} - \frac{1}{2}\bigg)
\geq 1-\frac{4(k-1)t^2}{q}
\]
as desired.
\end{proof}

\subsubsection*{Acknowledgments}
We would like to thank Dan Eniceicu, C\'{e}cilia Lancien, Shivan Mittal, David Perez-Garcia for helpful discussions. 
NHJ would like to acknowledge the Centro de Ciencias de Benasque Pedro Pascual for hospitality during the completion of part of this work. NHJ acknowledges support in part from DOE grant DE-SC0025615. The research of ML is  supported by the DFG
through the grant TRR 352 – Project-ID 470903074 and by the European Union (ERC Starting Grant MathQuantProp, Grant Agreement 101163620).\footnote{Views and opinions expressed are however those of the authors only and do not necessarily reflect those of the European Union or the European Research Council Executive Agency. Neither the European Union nor the granting authority can be held responsible for them.}

\appendix

\begin{footnotesize}
\bibliographystyle{plain}
\bibliography{refs}
\end{footnotesize}

\end{document}